\theoremstyle{plain}
\newtheorem{theorem}{Theorem}
\newtheorem{lemma}{Lemma}
\newtheorem{corollary}{Corollary}
\theoremstyle{definition}
\newtheorem{example}{Example}
\pgfplotsset{compat=1.5}
\author[a]{Alessandro Agnetis}
\author[b]{Ben Hermans\thanks{The author is funded by a Postdoctoral Fellowship of the Research Foundation -- Flanders.}\textsuperscript{,}}
\author[b]{Roel Leus}
\author[c]{Salim Rostami}
\affil[a]{Dipartimento di Ingegneria dell'Informazione e Scienze Matematiche, Universit\`a di Siena, Italy}
\affil[b]{Research Center for Operations Research \& Statistics, KU~Leuven, Belgium}
\affil[c]{I\'ESEG School of Management, France}
\date{} 
\title{Time-critical testing and search problems}
\begin{document}

\maketitle
\begin{abstract}
\noindent This paper introduces a problem in which the state of a system needs to be determined through costly tests of its components by a limited number of testing units and before a given deadline. We also consider a closely related search problem in which there are multiple searchers to find a target before a given deadline. These natural generalizations of the classical sequential testing problem and search problem are applicable in a wide range of time-critical operations such as machine maintenance, diagnosing a patient, and new product development. We show that both problems are NP-hard, develop a pseudo-polynomial dynamic program for the special case of two time slots, and describe a partial-order-based as well as an assignment-based mixed integer program for the general case. Based on extensive computational experiments, we find that the assignment-based formulation performs better than the partial-order-based formulation for the testing variant, but that this is the other way round for the search variant. Finally, we propose a pairwise-interchange-based local search procedure and show that, empirically, it performs very well in finding near-optimal solutions.
\end{abstract}

\section{Introduction}\label{sec:intro}
The timely diagnosis of a system constitutes a key task in a wide range of time-critical operations. The downtime of a machine due to a periodic or corrective maintenance, for example, crucially depends on the time needed to test which components are functioning correctly. Medical tests oftentimes also have a high time criticality, not only to establish a correct diagnosis, but also to identify an effective treatment. When developing a new product, in turn, it is essential to complete all tests concerning the product's safety and performance in time, because a delayed product launch could enable competitors to grasp a leading market share with a similar product.

Each of the above settings can be generically modeled as a multi-component system whose state is either up or down (healthy or unhealthy; safe or unsafe) depending on which of its~$n$ components are functioning \citep{unluyurt2004sequential}. Consider, for example, a machine that can only operate if all its components are functioning, a patient who only passes a medical examination if all test results are negative, or a new product that can only be launched if all safety and performance tests prove to be successful. These are all examples of so-called \emph{serial} or \emph{$n$-out-of-$n$ systems}, in which the state is only up if all of its components are functioning. A \emph{parallel} or \emph{$1$-out-of-$n$ system}, on the other hand, is up if there is at least one functioning component. For instance, it suffices to find a single treatment that works to cure a patient.

The \emph{sequential testing problem} deals with checking the state of a system through costly tests of its components, where each component has a given failure probability \citep{unluyurt2004sequential}. Failures are typically assumed to occur independently with a probability that is known from, for example, the analysis of historical data. The cost to test a component can measure both monetary expenses as well as intangible aspects such as how inconvenient the test is to a patient. Once the system's state is known, the testing procedure halts. The goal is then to determine in which sequence to test the different components so as to minimize the total expected testing cost.

In this paper, we consider a variant of the classical sequential testing problem in which every test takes unit time and in which the system's state has to be known within a given deadline~$T$. In order to comply with such timeliness requirements, there are~$m$ testing units on which tests can take place simultaneously, and the testing procedure is thus no longer purely sequential. As soon as the system's state is known, all testing units are halted.  This leads to what we call the \emph{time-critical testing problem}, where the goal is to schedule the~$n$ tests on the~$m$ available testing units so as to determine the system's state within the deadline~$T$ at minimum expected cost. Although we focus on serial systems, all our results are also directly applicable to parallel systems. Indeed, if we invert the interpretation of a test's outcome and of the definition of a system's up and down state, then the testing of a $1$-out-of-$n$ system becomes equivalent to the testing of a $n$-out-of-$n$ system \citep{unluyurt2004sequential}.

We also study a closely related \emph{time-critical search problem} in which there are~$m$ searchers and~$n$ possible locations that could contain a given target, and the objective is to find the target before a deadline~$T$ at minimum expected cost. This is a variant of the time-critical testing problem where the tests are dependent in the sense that exactly one of the system's components is defective, and where we need to identify this component \citep{wagner2001discrete}. Applications to this problem include a malfunctioning component of a machine that needs to be identified before a certain deadline, a terrorist that needs to be captured before an attack takes place, or a person in distress who needs to be found in the context of a search and rescue operation. 

The main contributions of this paper are as follows. We formally define the time-critical testing problem and the time-critical search problem, and establish that both problems are strongly NP-hard in general. We show in particular that the time-critical search problem reduces to the time-critical testing problem, which extends an earlier result of~\cite{kelly1982remark} towards multiple testing units. For the special case where $T = 2$, we establish weak NP-hardness and describe a pseudo-polynomial dynamic program as well as a fully polynomial-time approximation scheme (FPTAS).  Next, we introduce a partial-order-based and an assignment-based mixed integer programming (MIP) formulation for each of the problem variants, and show how to adapt these formulations to solve the precedence-constrained sequential testing problem \citep{monma1979sequencing} and the batch-testing problem \citep{daldal2016approximation}. To the best of our knowledge, no compact exact MIP formulations were previously available for the latter two problems. Finally, we illustrate that intuitive approaches such as a greedy heuristic or a pairwise-interchange-based local search procedure fail to provide a constant-factor approximation guarantee.

Based on extensive computational experiments, we find that the assignment-based formulation performs better than the partial-order-based formulation for the testing variant, while this is the other way round for the search variant.
Moreover, instances for the time-critical testing problem are considerably harder to solve than those for the search variant. Despite its lack of a theoretical worst-case guarantee, we also find that, empirically, our local search procedure seems to perform very well in finding near-optimal solutions in limited computation times.

The remainder of this paper is organized as follows. After the literature review in Section~\ref{sec:lit_rev}, we formally define the time-critical testing problem and search problem in Section \ref{sec:def}. Next, we examine the structural properties of an optimal schedule in Section~\ref{sec:struct_prop}, analyze the problems' complexity in Section~\ref{sec:compl}, and consider the special case of two time slots in Section~\ref{sec:DP}. Sections~\ref{sec:form} and~\ref{sec:heur}, in turn, discuss our proposed MIP formulations and local search heuristic, whose performance is evaluated in the computational experiments reported in Section~\ref{sec:comput}. Section~\ref{sec:conc}, finally, concludes and indicates directions for further research.

\section{Related work}\label{sec:lit_rev}
Since the U.S.\ Air Force highlighted the importance of the sequential testing problem \citep{johnson1956optimal}, it has received considerable attention in the literature. For a serial system without precedence constraints, the following intuitive result has been established independently in a variety of different contexts \citep{mitten1960analytic,boothroyd1960least,kadane1969quiz}: every sequence that inspects the components in non-decreasing order of their cost-to-failure-probability ratio is optimal. This idea has subsequently been generalized into polynomial-time algorithms for the sequential testing problem with series-parallel precedence constraints \citep{garey1973optimal,monma1979sequencing,berend2014optimal}. For general precedence constraints, the problem is NP-hard \citep{kelly1982remark,de2008r};  \cite{rostami2019sequential} describe a branch-and-price algorithm  as well as a dynamic program for this case. We refer to \cite{unluyurt2004sequential} for an excellent and detailed review of the sequential testing literature.

Most closely related to our work is the \emph{batch-testing problem} for serial systems as introduced by \cite{daldal2016approximation}. In this variant of the sequential testing problem, tests can be performed simultaneously in batches and there is a fixed set-up cost per batch. \cite{daldal2016approximation} develop an approximation algorithm for the problem with a~$6.829 + \varepsilon$ worst-case guarantee factor and an integer program that approximates the optimal solution within a factor~$1 + \varepsilon$, where~$\varepsilon \in (0,1)$ is arbitrary but fixed. \cite{daldal2017sequential} study the same problem with additional restrictions on the set of tests that can be combined in a batch, and \cite{segev2018polynomial} use techniques from approximate dynamic programming to devise a polynomial-time approximation scheme (PTAS) for the batch-testing problem. Contrary to the aforementioned articles, we consider a deadline, an upper bound~$m$ on the batch size, and no fixed set-up cost per batch. For this reason, the approximation algorithms as proposed by \cite{daldal2016approximation} and \cite{segev2018polynomial} do not seem to be directly modifiable towards our setting. The assignment-based formulation to be described in Section~\ref{sec:form}, however, can be adapted such that it also solves the batch-testing problem.

\cite{agnetis2009sequencing} define a problem where~$n$ unreliable jobs with given revenues are to be processed on~$m$ machines. If a job fails, then all subsequent jobs on the same machine are blocked, and their revenue is lost. The goal is then to assign the jobs to machines so as to maximize the expected revenue. \cite{agnetis2009sequencing} give a polyhedral characterization for the case of one machine and show that the problem is NP-hard for two machines. In subsequent articles, \cite{agnetis2014list} show that a list scheduling algorithm based on the so-called Z-ratio leads to a $0.8535$-approximation for the case of two machines, and to a $0.8531$-approximation in the general case of~$m$ machines \citep{agnetis2020largest}. Our work differs from the aforementioned articles since we minimize the expected cost rather than maximize the expected revenue, and since we assume that the testing procedure halts on all units as soon as the system's state is known.

Another related problem is the \emph{salvo policy problem}, in which there are a limited number of moments in time (the so-called salvos) at which multiple missiles can be fired simultaneously to stop an incoming object \citep{gould1984efficient}.  If all the missiles fired at a certain salvo are unsuccessful, the missiles scheduled at the next salvo must be fired. Each missile fired at a certain salvo has a hit probability that only depends on the salvo, and the goal is to minimize the expected number of missiles that need to be fired. \Citet{van2020efficient} develops algorithms to solve different variants of the problem, and we refer to his article as well as the work of \cite{glazebrook2004shoot} for a more detailed literature review. The main difference with our work is that, in our setting, the failure probabilities and testing costs are time-independent and can differ between components. Hence, we can also interpret the time-critical testing problem as a variant of the salvo policy problem where there are~$T$ salvos and $n$ missiles with different time-independent costs and hit probabilities, and where no more than $m$ missiles can be fired at each salvo. To the best of our knowledge, this problem variant has not been studied before.

If we identify success probabilities with job weights and search costs with job processing times, then the classical sequential search problem (i.e., the time-critical search problem with a single searcher and no deadline) is equivalent to the problem $1 \mid \mid \sum w_j C_j$ of minimizing the total weighted completion time of a job set on a single machine. Interestingly, numerous results for $1 \mid \mid \sum w_j C_j$ parallel those for the sequential testing problem: without precedence constraints it is optimal to search the different locations in non-decreasing order of their cost-to-probability ratio \citep{smith1956various,gluss1959optimum}, this rule can be generalized to solve the case with series-parallel precedence constraints efficiently \citep{lawler1978sequencing,sidney1975decomposition}, and the problem is strongly NP-hard for general precedence constraints \citep{lenstra1978complexity}. In fact, \cite{kelly1982remark} shows that the sequential testing problem is at least as hard as the problem $1 \mid \mid \sum w_j C_j$. Our results generalize these findings to a time-critical setting with multiple testing units, since we show in Section~\ref{sec:compl} that the time-critical search problem reduces to the time-critical testing problem.

In the rich literature on search theory (see for instance the books of \cite{stone2007theory} or \cite{alpern2013search} for a detailed overview), there are a number of articles that consider search problems with multiple searchers \citep{li2018multiple}, general cost and weight functions \citep{fokkink2019submodular,happach2020general}, or an unknown deadline \citep{lin2016finding,lidbetter2020search}. These problems, however, are overall quite different from our setting, and the time-critical search problem as defined in this paper appears to be novel.

\section{Problem definition and notation}\label{sec:def}
Consider an~$n$-out-of-$n$ system whose state is up if and only if all of its~$n$ components are functioning; otherwise, the system is down. Each component $j \in N \coloneqq \{1,\ldots, n\}$ is functioning with a given \emph{success probability} $p_j \in [0,1] \cap \mathbb{Q}$, where this event is independent of the outcome for other components, and we can determine whether a component is functioning by means of a test with \emph{testing cost}~$c_j \in \mathbb{N}$. We assume that a test requires one time unit, that the cost of a test is borne regardless of its outcome, and that tests are perfect in the sense that neither type-1 nor type-2 errors may occur.  There are $m \leq n$ identical \emph{testing units} or \emph{machines} on which tests can take place simultaneously, and we require that all tests be completed within a \emph{deadline}~$T \leq n$. To ensure feasibility, we assume that~$n \leq mT$. If a test fails, then the process stops on all machines and we conclude that the system is down. The objective is to schedule all~$n$ tests on the~$m$ available machines so as to determine the system's state at minimum expected cost within the deadline~$T$. 

Since all tests require unit time, we can divide the time horizon into~$T$ equal \emph{time slots}. Moreover, since the testing procedure is halted on all machines as soon as the system's state is known, it is not relevant which machine a given test is assigned to. Hence, we can represent a feasible solution or \emph{schedule} by an ordered partition $\sigma = (S_1,\dots,S_T)$ of the set $N$ into $T$ sets, where~$S_t \subseteq N$ with $|S_t|\leq m$ denotes the set of tests scheduled in time slot $t = 1,\dots,T$. 
For an arbitrary schedule~$\sigma = (S_1,\dots,S_T)$, denote by $\mathcal{A}_t \coloneqq \bigcup_{k=1}^{t-1} S_k$ the set of all tests scheduled before time slot $t \in \{1, \ldots, T\}$, where $\mathcal{A}_1 = \emptyset$. Given a subset~$S \subseteq N$ and a sequence $(a_j)_{j \in N}$ of real numbers, we use the conventional notation that~$a(S) \coloneqq \sum_{j \in S} a_j$, and we let~\mbox{$\sum_{j \in \emptyset} a_j = 0$} and \mbox{$\prod_{j \in \emptyset} a_j = 1$}. 
The \emph{time-critical testing problem} (TCTP) then asks for a schedule~$\sigma = (S_1, \ldots, S_T)$ that minimizes the \emph{expected testing cost}
\begin{equation}
z(\sigma) \coloneqq \sum_{t=1}^T c(S_t) \prod_{j\in \mathcal{A}_{t}}p_j. \label{eq:expcost}
\end{equation}
Here, each index~$t$ associates the cost $c(S_t)$ with the probability that the tests in~$S_t$ need to be performed because all components tested before time~$t$ were successful (and the system's state is thus still unknown). A TCTP instance is completely defined by the tuple~$(m, n, T, (c_j, p_j)_{j \in N})$, and we refer to~$m$-TCTP as the special case of the problem when the number of machines is fixed to~$m$.

We also consider a search problem related to TCTP, where a given target is hidden in \emph{exactly one} out of~$n$ possible locations. Each location $j \in N \coloneqq \{1,\ldots,n\}$ has a probability~$\pi_j \in [0,1] \cap \mathbb{Q}$ to contain the target, where $\sum_{j \in N} \pi_j = 1$. Searching location~$j$ takes unit time and leads to a cost~$c_j \in \mathbb{N}$. Given that there are~$m \leq n$ available searchers and that the object needs to be found before a certain deadline~$T\leq n$, where~$n \leq Tm$, the question is when to search each location so as to minimize the expected cost.
Using the same notation as above, a feasible schedule can be represented by an ordered partition~$\sigma = (S_1,\ldots, S_T)$ of the set~$N$ into $T$ sets with~$\vert S_t \vert \leq m$ for each~$t = 1,\ldots,T$. The \emph{time-critical search problem} (TCSP) then asks for a schedule $\sigma=(S_1,\dots,S_T)$ that minimizes the \emph{expected search cost} 
\begin{equation}\label{eq:search_cost}
\bar{z}(\sigma)	\coloneqq \sum_{t = 1}^T c(S_t) \sum_{k = t}^T \pi(S_k).
\end{equation}
Here, the index~$t$ in the outer summation reflects the fact that we incur cost~$c(S_t)$ when the target is in one of the locations searched in time slots $k = t, \ldots, T$, which occurs with probability $\sum_{k = t}^T \pi(S_k)$. A TCSP instance is completely defined by the tuple~$(m, n, T, (c_j, \pi_j)_{j \in N})$, and we refer to \mbox{$m$-TCSP} as the special case with a fixed number of~$m$ searchers.

The difference between TCTP and TCSP consists in the way in which both variants model the probability that a job (i.e., testing a component or searching a location) needs to performed. While for TCTP the success probabilities are taken to be independent, we assume for TCSP that exactly one location contains the target and the probabilities thus add up to one. For TCTP, the probability that a job needs to be performed then equals the product of success probabilities over all jobs scheduled in preceding time slots. For TCSP, in turn, this probability equals one minus the probability that one of the locations searched in preceding time slots contains the target.

\section{Structural properties}\label{sec:struct_prop}
If there is only a single testing unit or searcher, then TCTP and TCSP reduce to the sequential testing problem and to the problem $1 \mid \mid \sum w_j C_j$ of minimizing the weighted completion time on a single machine, respectively. Hence, as mentioned in Section~\ref{sec:lit_rev}, every sequence that tests the components or searches the locations in non-decreasing order of, respectively, the ratio~$c_j / (1-p_j)$ or~$c_j / \pi_j$ is optimal. Lemma~\ref{lem:non_empt_time}, stated below, implies that the resulting schedule is also optimal for the setting with multiple machines or searchers if the deadline allows for this. That is, the only reason to perform tests or to search locations simultaneously is to comply with a deadline~$T < n$.

\begin{lemma}\label{lem:non_empt_time}
	For each TCTP or TCSP instance there exists an optimal schedule $\sigma^\star = (S^\star_1, \ldots, S^\star_T)$ in which~$S^\star_t$ is non-empty for all $t \leq T$.
\end{lemma}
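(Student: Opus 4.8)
The plan is to prove a stronger \emph{dominance} statement: any feasible schedule can be transformed, without increasing its cost, into one in which every time slot is non-empty. Since the transformation never raises the objective, applying it to an optimal schedule yields an optimal schedule of the desired form.

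The first step is to rewrite the objective as a sum over individual jobs. For a schedule $\sigma$, let $t(j)$ denote the slot containing job $j$ and let $B_j \coloneqq \{k \in N : t(k) < t(j)\}$ be the set of jobs scheduled strictly before $j$. Reordering the double sum in \eqref{eq:expcost} gives $z(\sigma) = \sum_{j \in N} c_j \prod_{k \in B_j} p_k$, and likewise \eqref{eq:search_cost} becomes $\bar{z}(\sigma) = \sum_{j \in N} c_j \bigl(1 - \sum_{k \in B_j} \pi_k\bigr)$. In both cases the contribution of job $j$ equals $c_j$ times a weight that depends only on the \emph{set} $B_j$ and that is weakly decreasing under inclusion, since enlarging $B_j$ multiplies the weight by factors $p_k \leq 1$ (TCTP) or subtracts non-negative mass $\pi_k$ (TCSP). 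This unified form lets us treat both problems simultaneously.

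Two consequences are immediate. First, inserting, deleting, or relocating empty slots while preserving the relative order of the non-empty groups leaves every set $B_j$ unchanged, and hence leaves the cost unchanged; so we may freely move empty slots around. Second, if $\sigma$ contains at least one empty slot, then it has at most $T-1$ non-empty slots; as $T \leq n$, these hold $n \geq T > T-1$ jobs, so by the pigeonhole principle some slot $S_s$ satisfies $|S_s| \geq 2$. Using the first consequence, relocate an empty slot so that it immediately follows such a slot, i.e.\ so that slot $s+1$ is empty.

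The core step is then to \emph{delay} a single job $j \in S_s$ into the adjacent empty slot $s+1$. This is where the only real care is needed: moving a job \emph{earlier} is in general not beneficial, and moving it into a \emph{non-adjacent} later empty slot would alter the before-sets of the jobs lying in between. Moving $j$ into the immediately following empty slot avoids both pitfalls. The set $B_k$ of every job $k \neq j$ is unchanged (the former slot-mates $S_s \setminus \{j\}$ neither gain nor lose $j$, and no job sits strictly between slots $s$ and $s+1$), while $B_j$ grows precisely by $S_s \setminus \{j\}$; by the monotonicity established above, the cost therefore weakly decreases. All capacity constraints remain satisfied, as the target slot receives a single job and $S_s$ only shrinks. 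Since $|S_s| \geq 2$, the slot $S_s$ stays non-empty and slot $s+1$ becomes non-empty, so the number of empty slots strictly decreases. Inducting on the number of empty slots terminates in a schedule with all $T$ slots non-empty whose cost is no larger than that of $\sigma$, which proves the claim. The main obstacle is precisely identifying this cost-non-increasing move; once the per-job reformulation and the empty-slot invariance are in place, the remaining bookkeeping is routine.
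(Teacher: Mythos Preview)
Your proof is correct and follows essentially the same approach as the paper: both arguments isolate a slot containing at least two jobs (using $T\le n$), delay one of its jobs into an adjacent empty slot, and verify that this weakly decreases the objective. The paper first pushes all empty slots to the end and then computes $z(\sigma)-z(\sigma')$ directly from~\eqref{eq:expcost}, whereas you rewrite the objective as a per-job sum with inclusion-monotone weights and use this to treat TCTP and TCSP in one stroke; this is a slightly cleaner packaging of the same idea rather than a genuinely different route.
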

\begin{proof}
	Since the proof is completely analogous for TCTP and TCSP, we only consider the former case. For an arbitrary TCTP instance, let $\sigma = (S_1, \ldots, S_T)$ be an optimal schedule that does not satisfy the property. By a straightforward pairwise interchange argument on Equation~\eqref{eq:expcost}, we can assume without loss of generality that all non-empty sets are scheduled before the empty sets in~$\sigma$. Call~$t$ the first time slot for which $\vert S_{t} \vert \geq 2$; such a slot must exist because otherwise~$\sigma$ would satisfy the property (recall that our definition of a TCTP instance assumes that~$T \leq n$). Now take an arbitrary test~$j \in S_{t}$ and define a new schedule~$\sigma^\prime = (S_1, \ldots, S_{t} \setminus \{j\}, \{j\}, S_{t + 1}, \ldots, S_{T-1})$. Equation~\eqref{eq:expcost} then yields that
	\begin{equation*}
		z(\sigma) - z(\sigma^\prime) = \left(\prod_{i \in \mathcal{A}_{t}} p_i\right) \left(1 - \prod_{i \in S_{t} \setminus \{j\}} p_i \right) c_j \geq 0.
	\end{equation*}
	Repeating this argument at most~$n$ times yields an optimal schedule that satisfies the property.
\end{proof}

The following lemma generalizes the ratio rule to the setting of multiple machines. In particular, given a subset~$S \subseteq N$ for a TCTP instance, define the \emph{cost-to-probability ratio} as
\begin{equation*}
\rho(S) \coloneqq \frac{c(S)}{1 - \prod_{j \in S}p_j}
\end{equation*}
if $\prod_{j \in S}p_j \neq 1$, and~$\rho(S) \coloneqq +\infty$ otherwise. For an instance of TCSP, in turn, we define the ratio
\begin{equation*}
\bar{\rho}(S) \coloneqq \frac{c(S)}{\pi(S)}
\end{equation*}
if $\pi(S) \neq 0$, and~$\bar{\rho}(S) \coloneqq +\infty$ otherwise. Lemma~\ref{lem:struct_prop} below implies that, for both TCTP and TCSP, there exists an optimal schedule that sequences the different sets in non-decreasing order of the corresponding ratios.

\begin{lemma}\label{lem:struct_prop}
	Reordering the sets~$S_t$ of a schedule~$\sigma = (S_1,\ldots, S_T)$ in non-decreasing order of the ratio~$\rho(S_t)$ or~$\bar{\rho}(S_t)$ does not increase the expected testing or search cost, respectively.
\end{lemma}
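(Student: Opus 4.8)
The plan is to prove the statement by an adjacent pairwise interchange argument, handling TCTP and TCSP in parallel. The guiding observation is that, once the unordered partition $\{S_1,\ldots,S_T\}$ is fixed, each set $S_t$ behaves like a single ``super-job'' with cost $c(S_t)$ and effective failure probability $1-\prod_{j\in S_t}p_j$ (for TCTP) or weight $\pi(S_t)$ (for TCSP). Ordering these super-jobs optimally is then exactly the classical single-machine ratio-rule problem recalled at the start of Section~\ref{sec:struct_prop}, and the lemma should follow from the same exchange argument that establishes that rule.

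Concretely, I would fix an arbitrary schedule $\sigma=(S_1,\ldots,S_T)$ and let $\sigma'$ be the schedule obtained by swapping the two adjacent sets $S_t$ and $S_{t+1}$. The first step is to verify that this local swap leaves the contributions of all other time slots unchanged. For TCTP this holds because $\prod_{i\in\mathcal{A}_s}p_i$ is unaffected for every $s\notin\{t,t+1\}$: the sets scheduled strictly before slot $t$ are untouched, while for slots $s>t+1$ the product $\prod_{i\in\mathcal{A}_s}p_i$ depends on the swapped pair only through $\prod_{j\in S_t\cup S_{t+1}}p_j=(\prod_{j\in S_t}p_j)(\prod_{j\in S_{t+1}}p_j)$, which is symmetric in the two sets. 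For TCSP the analogous invariant is that $\pi(S_t)+\pi(S_{t+1})$, and hence $\sum_{k\ge s}\pi(S_k)$ for every $s\le t$ and every $s>t+1$, is preserved.

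The second step is an explicit computation of the cost difference, which after the previous step involves only slots $t$ and $t+1$. Writing $P=\prod_{i\in\mathcal{A}_t}p_i$, $\alpha=\prod_{j\in S_t}p_j$ and $\beta=\prod_{j\in S_{t+1}}p_j$, I expect to obtain $z(\sigma)-z(\sigma')=P\bigl(c(S_t)(1-\beta)-c(S_{t+1})(1-\alpha)\bigr)$, whose sign is governed by comparing $c(S_t)/(1-\alpha)=\rho(S_t)$ with $c(S_{t+1})/(1-\beta)=\rho(S_{t+1})$. For TCSP, setting $u=\pi(S_t)$ and $v=\pi(S_{t+1})$, the same computation should give $\bar z(\sigma)-\bar z(\sigma')=c(S_t)\,v-c(S_{t+1})\,u$, whose sign is governed by $\bar\rho(S_t)$ versus $\bar\rho(S_{t+1})$. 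In either case, whenever the adjacent pair is out of order, i.e.\ $\rho(S_t)>\rho(S_{t+1})$ (resp.\ $\bar\rho(S_t)>\bar\rho(S_{t+1})$), cross-multiplying shows the difference is non-negative, so the swap does not increase the cost.

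The final step promotes this local statement to the global one: any ordering of the sets can be transformed into a non-decreasing-ratio ordering by a finite sequence of adjacent transpositions, each swapping an out-of-order neighbouring pair, as in a bubble sort. Since each such swap is non-increasing in cost, the sorted schedule is no worse than $\sigma$. The main point requiring care, rather than a deep obstacle, is the treatment of infinite ratios, which arise exactly when $\alpha=1$ or $\beta=1$ (resp.\ $u=0$ or $v=0$); here one checks that the difference formulas remain valid and retain the correct sign, and that ties, including two infinite ratios, may be ordered arbitrarily without affecting the cost.
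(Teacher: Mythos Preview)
Your proposal is correct and follows essentially the same approach as the paper's proof: an adjacent pairwise interchange yielding the identical difference formula $z(\sigma)-z(\sigma')=\bigl(\prod_{j\in\mathcal{A}_t}p_j\bigr)\bigl[(1-\prod_{j\in S_{t+1}}p_j)c(S_t)-(1-\prod_{j\in S_t}p_j)c(S_{t+1})\bigr]$, followed by iterating the swap at most $T^2$ times to reach the sorted order. Your write-up is in fact more detailed than the paper's, which omits the TCSP computation as analogous and does not explicitly discuss the infinite-ratio edge cases.
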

\begin{proof}
	Since the proof is again completely analogous for TCTP and TCSP, we only consider the former case. Let $\sigma = (S_1, \ldots, S_T)$ be a schedule with $\rho(S_t) > \rho(S_{t+1})$ for some~$t < T$ and define the schedule~$\sigma^\prime$ from~$\sigma$ in which~$S_t$ and~$S_{t+1}$ have been interchanged. Equation~\eqref{eq:expcost} then yields that
	\begin{equation*}
	z(\sigma) - z(\sigma^\prime) = \left(\prod_{j \in \mathcal{A}_{t}} p_j\right) \left[ \left(1 - \prod_{j \in S_{t + 1}} p_j \right) c(S_{t}) - \left(1 - \prod_{j \in S_{t}} p_j \right) c(S_{t + 1}) \right],
	\end{equation*}
	which is non-negative. Repeating this procedure at most~$T^2$ times then yields the result.
\end{proof}

Based on Lemma~\ref{lem:struct_prop}, one might propose a greedy approach where we test in each time slot a subset~$S$ of the remaining components for which the ratio~$\rho(S)$ is minimal among all sets that contain sufficient tests to meet the deadline. The next example, however, shows that this approach does not lead to a constant-factor approximation guarantee for TCTP.\@  A similar example shows that an analogous greedy approach that minimizes the ratio~$\bar{\rho}(S)$ also fails to provide a constant-factor approximation guarantee for TCSP.\@

\begin{example}\label{ex:ratio}
	Consider a TCTP instance  with~$n = 3$ tests, $m = 2$ machines, and deadline~$T=2$. Given an integer~$M>0$, we define the costs and success probability of each job as follows: $c_1 = 1$, $p_1 = 1/M$,  $c_2= 0$,  $p_2 = 1-1/M$,  $c_3 = M$,  and $p_3  = 1-1/M$. The greedy approach described above would then test component~2 in the first time slot (its ratio equal to~$0$ is clearly minimal), and the remaining components~$\{1,3\}$ in the second time slot. This leads to an expected testing cost equal to $z_\text{greedy} = c_2 + p_2 (c_1 + c_3) = 0 + (1-1/M) (M + 1) = M - 1/M$. The schedule that tests~$\{1\}$ first and~$\{2,3\}$ next, however, attains an expected testing cost equal to $z^\star = c_1 + p_1 (c_2 + c_3) = 1 + (1/M) M = 2$. Hence, the ratio $z_\text{greedy} / z^\star$ is $\Omega(M)$ and thus cannot be bounded by a constant.
\end{example}

The next lemma reveals that, for TCTP, the range between the minimal and maximal expected testing cost, and thus the potential benefit of identifying an optimal schedule, increases as the joint success probability~$\prod_{j \in N}p_j$ decreases. Intuitively, if the system has a higher probability to be up, then it becomes more likely that all tests need to be performed, and their specific ordering becomes less influential. Observe that an analogous result does not hold for TCSP, since the probability~$\sum_{j \in N} \pi_j$ is fixed to one  for this problem.

\begin{lemma}\label{lem:trivial_approx}
	Given a TCTP instance, let~$\sigma^\star = (S^\star_1, \ldots, S^\star_T)$ be an optimal schedule. For every schedule~$\sigma=(S_1, \ldots, S_T)$ it then holds that~$z(\sigma^\star) \geq z(\sigma) \prod_{j \in N}p_j $.
\end{lemma}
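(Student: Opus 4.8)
The plan is to sandwich both $z(\sigma^\star)$ and $z(\sigma)\prod_{j \in N} p_j$ around the common reference value $c(N)\prod_{j \in N} p_j$, where $c(N) = \sum_{j \in N} c_j$ is the total testing cost. The driving observation is that in Equation~\eqref{eq:expcost} every multiplicative factor $\prod_{j \in \mathcal{A}_t} p_j$ lies between the joint success probability $\prod_{j \in N} p_j$ and $1$, simply because $\mathcal{A}_t \subseteq N$ and each $p_j \in [0,1]$.

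First I would establish a universal lower bound valid for \emph{any} schedule $\sigma' = (S'_1, \ldots, S'_T)$. Since $\mathcal{A}_t \subseteq N$ and the success probabilities are at most one, we have $\prod_{j \in \mathcal{A}_t} p_j \geq \prod_{j \in N} p_j$ for every $t$. Substituting this into~\eqref{eq:expcost} and using that the sets $S'_t$ partition $N$, so that $\sum_t c(S'_t) = c(N)$, yields $z(\sigma') \geq c(N)\prod_{j \in N} p_j$. Applying this in particular to the optimal schedule gives $z(\sigma^\star) \geq c(N)\prod_{j \in N} p_j$.

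Next I would derive a matching upper bound on the cost of the \emph{arbitrary} schedule $\sigma$. Because every factor $\prod_{j \in \mathcal{A}_t} p_j \leq 1$, Equation~\eqref{eq:expcost} immediately gives $z(\sigma) \leq \sum_t c(S_t) = c(N)$. Multiplying both sides by the nonnegative quantity $\prod_{j \in N} p_j$ then yields $z(\sigma)\prod_{j \in N} p_j \leq c(N)\prod_{j \in N} p_j$.

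Combining the two bounds finishes the argument by transitivity: $z(\sigma)\prod_{j \in N} p_j \leq c(N)\prod_{j \in N} p_j \leq z(\sigma^\star)$. I expect no genuine obstacle beyond recognizing that both estimates pivot on the same reference value $c(N)\prod_{j \in N} p_j$; notably, the optimality of $\sigma^\star$ enters only through the trivial fact that it is itself a feasible schedule and hence obeys the lower bound, so neither an exchange argument nor any of the earlier structural lemmas is required.
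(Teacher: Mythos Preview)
Your proposal is correct and mirrors the paper's own proof: both arguments sandwich the quantities around the common pivot $c(N)\prod_{j\in N}p_j$, using $\prod_{j\in\mathcal{A}_t}p_j\geq\prod_{j\in N}p_j$ to lower-bound $z(\sigma^\star)$ and $z(\sigma)\leq c(N)$ to upper-bound $z(\sigma)\prod_{j\in N}p_j$. The paper just compresses these steps into a single chain of inequalities.
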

\begin{proof}
	Consider an arbitrary schedule $\sigma=(S_1, \ldots, S_T)$. Since~$z(\sigma) \leq c(N)$, we obtain that
	\[
	z(\sigma^\star) = \sum_{t=1}^T c(S^\star_t) \prod_{j \in \mathcal{A}^\star_{t}} p_j \geq \sum_{t=1}^T c(S^\star_t) \prod_{j \in N} p_j = c(N) \prod_{j \in N} p_j \geq z(\sigma) \prod_{j \in N} p_j. \qedhere
	\]
\end{proof}

\section{Complexity analysis}\label{sec:compl}
In this section we analyze the complexity of TCTP and TCSP.\@ First, we show in Section~\ref{sec:search_prob} that TCSP is NP-hard even if~$T = 2$ and that the \mbox{$m$-TCSP} is strongly NP-hard for every~$m \geq 3$. Next, we show that TCSP reduces to TCTP in Section~\ref{sec:TCTP_NPhard}, so that the hardness results for TCSP carry over directly to TCTP.\@

\subsection{Hardness of TCSP}\label{sec:search_prob}
Given a constant~$\alpha \in \mathbb{Q}$, consider the decision version of TCSP, which we call TCSP-D, that asks whether there exists a schedule~$\sigma$ such that~$\bar{z}(\sigma) \leq \alpha$. To establish the complexity of TCSP-D, we will use the following lemma:
\begin{lemma}\label{lem:concavity}
	For every $r \in \mathbb{N}$ numbers $a_1, \ldots, a_r \in \mathbb{R}$ with a fixed sum $A \coloneqq \sum_{i = 1}^r a_i$ it holds that
	\begin{equation*}
	\sum_{i = 1}^r \sum_{j = i}^r a_i a_j \geq \frac{(r+1)A^2}{2r},
	\end{equation*}
	with equality holding if and only if $a_i = A / r$ for every $i = 1, \ldots, r$.
\end{lemma}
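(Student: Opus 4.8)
The plan is to reduce the claimed inequality to a single well-known fact by first simplifying the double sum on the left-hand side. The key observation is that $\sum_{i=1}^r \sum_{j=i}^r a_i a_j$ collects exactly the products $a_i a_j$ with $i \leq j$, i.e.\ the diagonal terms $a_i^2$ together with each off-diagonal pair $a_i a_j$ (for $i < j$) counted once. Since $A^2 = \bigl(\sum_{i=1}^r a_i\bigr)^2 = \sum_{i=1}^r a_i^2 + 2\sum_{i < j} a_i a_j$, the off-diagonal part equals $\tfrac{1}{2}\bigl(A^2 - \sum_{i=1}^r a_i^2\bigr)$, and therefore I would first establish the identity
\begin{equation*}
\sum_{i=1}^r \sum_{j=i}^r a_i a_j = \frac{1}{2}\left(A^2 + \sum_{i=1}^r a_i^2\right).
\end{equation*}

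With this identity in hand, the inequality to be proven becomes equivalent, after multiplying by $2$ and cancelling the common term $A^2$, to the statement $\sum_{i=1}^r a_i^2 \geq A^2/r$. This is precisely the Cauchy--Schwarz inequality applied to the vectors $(a_1,\ldots,a_r)$ and $(1,\ldots,1)$, since $A^2 = \bigl(\sum_{i=1}^r 1 \cdot a_i\bigr)^2 \leq r \sum_{i=1}^r a_i^2$; equivalently, one may invoke the convexity of $x \mapsto x^2$ (the QM--AM inequality). The equality characterization is then inherited directly: Cauchy--Schwarz holds with equality if and only if the two vectors are proportional, which here forces all $a_i$ to be equal, hence $a_i = A/r$ for every $i$, and this is preserved through the single cancellation step.

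The main (and essentially only) obstacle is recognizing the algebraic rewriting of the diagonal-inclusive double sum; once this identity is in place, the remainder is a textbook application of Cauchy--Schwarz together with its standard equality condition. I would therefore structure the write-up as two short displays — the identity followed by the Cauchy--Schwarz bound — and then combine them, taking care to track the equality case through the cancellation so that the ``if and only if'' claim follows without any additional argument.
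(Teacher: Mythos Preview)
Your proposal is correct and complete: the identity $\sum_{i\le j} a_i a_j = \tfrac{1}{2}\bigl(A^2 + \sum_i a_i^2\bigr)$ is sound, the reduction to $\sum_i a_i^2 \ge A^2/r$ is a clean algebraic cancellation, and the Cauchy--Schwarz step with its equality condition delivers exactly the required ``if and only if''.

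Your route, however, differs from the paper's. The paper treats the problem as a constrained optimization: it substitutes $x_1 = A - \sum_{i\ge 2} x_i$ into the double sum, argues that the resulting function of $(x_2,\ldots,x_r)$ is strictly convex, and identifies its unique critical point as the one with all coordinates equal. Your argument replaces this analytic minimization by a purely algebraic rewriting followed by a one-line appeal to Cauchy--Schwarz (equivalently, the QM--AM inequality). The trade-off is that the paper's approach is self-contained calculus and makes the strict convexity explicit, whereas yours is shorter, avoids derivatives altogether, and makes the equality characterization immediate from the standard proportionality condition in Cauchy--Schwarz. Both approaches are equally rigorous; yours is the more elementary and arguably the more transparent presentation for this particular inequality.
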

\begin{proof}
	For arbitrary numbers~$r \in \mathbb{N}$ and~$A \in \mathbb{R}$, define the set~$X \coloneqq \{(x_1, \ldots, x_r) \in \mathbb{R}^r\colon \sum_{i=1}^r x_i = A\}$ and the function $f\colon X \to \mathbb{R}\colon (x_1, \ldots, x_r) \mapsto \sum_{i = 1}^r \sum_{j = i}^r x_i x_j$. It then suffices to show that~$f$ attains its unique minimum if $x_i = A/r$ for each $i = 1, \ldots, r$. To see that this is the case, observe that the function $g\colon \mathbb{R}^{r-1} \to \mathbb{R}\colon (x_2, \ldots, x_r) \mapsto f(A - \sum_{i = 2}^r x_i, x_2, \ldots, x_r)$ is strictly convex with partial derivatives equal to zero if and only if $x_j = A - \sum_{i = 2}^r x_i$ for each~$j = 2, \ldots, r$.
\end{proof}

TCSP-D is in NP since, for a given schedule~$\sigma$, we can use Equality~\eqref{eq:search_cost} to verify in polynomial time whether~$\bar{z}(\sigma) \leq \alpha$. The next theorem uses a reduction from the NP-complete \textsc{Partition} problem \citep{garey2002computers} to show that TCSP-D is NP-complete even for the special case of two time slots.

\begin{theorem}\label{th:DecProbNPcomp}
	TCSP-D is NP-complete, even if $T = 2$.
\end{theorem}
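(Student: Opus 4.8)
The plan is to reduce from \textsc{Partition}: given positive integers $b_1,\ldots,b_q$ with even total sum $\sum_i b_i = 2B$, decide whether some index subset sums to exactly $B$. Membership in NP has already been noted, so it suffices to build in polynomial time a TCSP-D instance with $T=2$ that is a yes-instance precisely when the \textsc{Partition} instance is.

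First I would specialize the objective to two slots. For any schedule $\sigma=(S_1,S_2)$ we have $\pi(S_1)+\pi(S_2)=\pi(N)=1$, so Equation~\eqref{eq:search_cost} collapses to
\[
\bar z(\sigma) = c(S_1)\bigl(\pi(S_1)+\pi(S_2)\bigr) + c(S_2)\pi(S_2) = c(S_1) + c(S_2)\pi(S_2).
\]
The key modeling idea is to make search costs proportional to probabilities, turning this into a symmetric quadratic in the mass assigned to each slot. Concretely, for each element $i$ I create one location with cost $c_i = b_i \in \mathbb{N}$ and probability $\pi_i = b_i/(2B)\in[0,1]\cap\mathbb{Q}$; these sum to one as required. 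I set $n=q$ and $T=2$, take $m=n$ so the bound $|S_t|\le m$ never binds (every partition of $N$ into two slots is feasible), and choose the threshold $\alpha = 3B/2$.

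Then I would observe that with this construction $c(S)=2B\,\pi(S)$ for every $S\subseteq N$, so writing $a_t = \pi(S_t)$ gives $\bar z(\sigma)=2B\sum_{t=1}^2\sum_{k=t}^2 a_t a_k$. Since $a_1+a_2=1$, Lemma~\ref{lem:concavity} with $r=2$ yields $\bar z(\sigma)\ge 2B\cdot\tfrac34 = 3B/2 = \alpha$, with equality if and only if $a_1=a_2=1/2$, i.e.\ if and only if $\pi(S_2)=1/2$, equivalently $c(S_2)=\sum_{i\in S_2} b_i = B$. Hence a schedule attaining $\bar z(\sigma)\le\alpha$ exists exactly when some subset $S_2\subseteq N$ has $b$-sum equal to $B$, which is precisely a yes-instance of \textsc{Partition}. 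The reduction is clearly polynomial.

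The work lies almost entirely in the setup rather than in any long computation; the one point needing genuine care is the ``only if'' direction. It is immediate that a balanced partition gives cost exactly $\alpha$, but to rule out a sub-threshold schedule coming from an \emph{un}balanced split I need that the value $3B/2$ is attained \emph{uniquely} at the balanced point, which is exactly the strict-equality clause of Lemma~\ref{lem:concavity} (equivalently, the strict convexity of $x\mapsto 2B - x + x^2/(2B)$). Choosing costs proportional to probabilities is what converts the deadline-constrained search objective into this strictly convex quadratic and is the crux of the argument; the only loose ends I would verify are that the threshold $\alpha$ is correct and that the parity assumption on $\sum_i b_i$ is handled cleanly (an odd total sum gives a trivial no-instance).
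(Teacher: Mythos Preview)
Your proposal is correct and is essentially the same argument as the paper's proof: both reduce from \textsc{Partition}, set $T=2$, $n=m=q$, $c_j$ equal to the input integers, $\pi_j$ proportional to $c_j$, and then invoke Lemma~\ref{lem:concavity} with $r=2$ to show the threshold $\alpha$ is achievable precisely when a balanced split exists. The only cosmetic differences are notational (you write the total as $2B$ and hence $\alpha=3B/2$, while the paper writes the total as $A$ and $\alpha=3A/4$) and your explicit remark about the odd-sum case, which the paper leaves implicit.
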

\begin{proof}
	Consider an instance of the \textsc{Partition} problem where, given~$q \in \mathbb{N}$ natural numbers $u_1,\ldots, u_q \in \mathbb{N}$ with $A \coloneqq \sum_{i=1}^q  u_i$, the question is whether there exists a set~$S \subseteq \{1,\ldots,r\}$ such that $u(S) = A/2$. Now define a TCSP-D instance with $m = q$ searchers, deadline~$T = 2$, and~$n = q$ locations with~$c_j = u_j$ and $\pi_j = u_j / A$ for all $j = 1, \ldots, q$. Finally, we let~$\alpha = 3A/4$. 
	
	For every schedule~$\sigma = (S_1, S_2)$, Lemma~\ref{lem:concavity} (with $r = 2$ and identifying $a_i$ with $u(S_i)$ for $i = 1,2$) yields that
	\begin{equation*}
	\sum_{t = 1}^2 c(S_t) \sum_{k = t}^2 \pi(S_k) = \frac{1}{A} \sum_{t = 1}^2 \sum_{k = t}^2 u(S_{t}) u(S_k) \geq \frac{1}{A} \frac{3 A^2}{4}=\frac{3A}{4}=\alpha.
	\end{equation*}
	Hence, the answer to the TCSP-D instance is `yes' if and only if there exists a schedule~$\sigma = (S_1,S_2)$ for which the above inequality holds with equality. From Lemma~\ref{lem:concavity}, this occurs if and only if $u(S_1)=u(S_2)=A/2$, i.e., if and only if the answer to the \textsc{Partition} instance is `yes'.
\end{proof}

The following theorem shows that TCSP-D, where~$T$ is part of the input, is strongly NP-complete for every fixed number~$m \geq 3$ of searchers. The proof uses a reduction from the strongly NP-complete \mbox{\textsc{3-Partition}} problem \citep{garey2002computers}.

\begin{theorem}\label{th:DecProbstrongNPcomp}
	TCSP-D with a fixed number~$m \geq 3$ of searchers is strongly NP-complete. 
\end{theorem}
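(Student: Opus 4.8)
The plan is to mirror the reduction behind Theorem~\ref{th:DecProbNPcomp}, but to replace \textsc{Partition} by the strongly NP-complete \textsc{3-Partition} problem so that the number of time slots scales with the instance. Recall that \textsc{3-Partition} asks, given $3q$ integers $a_1,\dots,a_{3q}$ with $B/4 < a_i < B/2$ for all $i$ and $\sum_i a_i = qB$, whether the index set can be partitioned into $q$ triples each summing to exactly $B$. Membership of TCSP-D in NP is already established. For the fixed value~$m\ge 3$ I would construct a TCSP-D instance with $T = q$ slots, $n = 3q$ locations, costs $c_j = a_j$, and probabilities $\pi_j = a_j/(qB)$, so that $\sum_{j} \pi_j = 1$ and the feasibility conditions $n = 3q \le mq = mT$ and $T\le n$ hold because $m \ge 3$. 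The threshold I would take is $\alpha = (q+1)B/2 \in \mathbb{Q}$.

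The second step reuses the crucial observation that $c_j$ and $\pi_j$ are proportional, which turns the objective into a quadratic form in the per-slot weights $a(S_t)$. Writing out Equation~\eqref{eq:search_cost} and applying Lemma~\ref{lem:concavity} with $r = q$, $A = qB$, and $a_i$ identified with $a(S_i)$, I obtain for every schedule $\sigma = (S_1,\dots,S_q)$ that
\[
\bar{z}(\sigma) = \frac{1}{qB}\sum_{t=1}^{q}\sum_{k=t}^{q} a(S_t)\,a(S_k) \;\ge\; \frac{1}{qB}\cdot\frac{(q+1)(qB)^2}{2q} \;=\; \frac{(q+1)B}{2} \;=\; \alpha,
\]
with equality holding, by the equality clause of Lemma~\ref{lem:concavity}, if and only if $a(S_t) = B$ for every $t = 1,\dots,q$. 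Hence a schedule meets the threshold exactly when the total weight in each slot equals $B$.

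The main obstacle, and the place where the \textsc{3-Partition} structure is genuinely needed, is to upgrade ``each slot has weight $B$'' to ``each slot is a triple summing to $B$,'' i.e.\ to recover an honest \textsc{3-Partition} solution. This is where the size restriction $B/4 < a_i < B/2$ does the work: any subset of the $a_i$ summing to exactly~$B$ contains at least three elements (two of them sum to less than~$B$) and at most three (four of them sum to more than~$B$), hence exactly three; since there are $3q$ locations distributed over $q$ slots each holding a triple, the slots form a partition into triples summing to $B$. This yields both directions of the equivalence: a \textsc{3-Partition} solution gives a schedule with $a(S_t)=B$ and thus $\bar{z}(\sigma)=\alpha$, while any schedule with $\bar{z}(\sigma)\le\alpha$ attains the bound and so induces a \textsc{3-Partition} solution. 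I would close by noting two points. First, only capacity~$3$ per slot is ever used in a balanced schedule, so any fixed $m \ge 3$ works verbatim (if desired, one can pad with $q(m-3)$ dummy locations of cost~$0$ and probability~$0$ to fill each slot to size exactly~$m$ without changing $\sum_j \pi_j = 1$ or the objective). Second, all created numbers---the $c_j$, the numerators and denominator of the $\pi_j$, and $\alpha$---are bounded by a polynomial in the magnitudes of the $a_i$ and in $q$, so the transformation is a pseudo-polynomial reduction and therefore transfers the \emph{strong} NP-completeness of \textsc{3-Partition} to TCSP-D for every fixed~$m \ge 3$.
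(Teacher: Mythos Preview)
Your proposal is correct and essentially identical to the paper's own proof: the same \textsc{3-Partition} reduction with $T=q$, $n=3q$, $c_j=a_j$, $\pi_j=a_j/(qB)$, threshold $\alpha=(q+1)B/2$, the same appeal to Lemma~\ref{lem:concavity} for the lower bound, and the same use of the $B/4<a_i<B/2$ constraint to force exactly three jobs per slot. Your additional remarks on padding and on the reduction being pseudo-polynomial are not in the paper but are fine supplementary observations.
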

\begin{proof}
	The proof uses a reduction from the \textsc{3-Partition} problem. Given $q,B \in \mathbb{N}$, consider~$3q$ integers $u_1, \ldots, u_{3q}$ with~$\sum_{j=1}^{3q}u_j = qB$ and~$B/4 < u_j < B/2$ for each~$j = 1, \ldots, 3q$. The \mbox{\textsc{3-Partition}} problem then asks whether $\{u_1, \ldots, u_{3q}\}$ can be partitioned into~$q$ subsets $U_1, \ldots, U_q$ with equal sum $u(U_t) = B$ for each $t = 1, \ldots, q$. Observe that, since~$B/4 < u_i < B/2$ for each~$i = 1, \ldots, 3q$, the subset~$U_t$ must have equal cardinality~$\vert U_t \vert = 3$ for each $t=1,\ldots, q$.
	
	For an arbitrary \textsc{3-Partition} instance $(q, B, u_1, \ldots, u_{3q})$, define a \mbox{TCSP-D} instance with $\alpha = (q + 1)B/2$, a fixed number of $m \geq 3$ searchers, a deadline~$T = q$, and $n = 3q$ locations with $c_j = u_j$ and~$\pi_j = u_j / (qB)$ for every $j \in N$.
	
	For every schedule~$\sigma = (S_1, \ldots, S_T)$ it follows from Lemma~\ref{lem:concavity} (by taking $r = T = q$, and by identifying the numbers $a_i$ with $u(S_i)$ for $i = 1,\ldots, T$) that
	\begin{equation*}
	\sum_{t = 1}^T c(S_t) \sum_{k = t}^T \pi(S_k) =  \frac{1}{qB} \sum_{t = 1}^T  \sum_{k = t}^{T} u(S_t) u(S_{k}) \geq \frac{1}{qB} \frac{(q+1)(qB)^2}{2q} = \frac{(q+1)B}{2} = \alpha.
	\end{equation*}
	Hence, the answer to the TCSP-D instance is  `yes' if and only if there exists a schedule~$\sigma = (S_1, \ldots, S_T)$ for which the above inequality holds at equality. From Lemma~\ref{lem:concavity}, this occurs if and only if $u(S_t)=qB/q=B$ for all $t=1,\dots,T$. Observe that, since $B/4 < u_i < B/2$ for each $i = 1, \ldots, 3q$, this latter condition can only be satisfied if~$\sigma$ schedules exactly three jobs in every time slot, such that the schedule is feasible with respect to the machine capacity~$m \geq 3$.
\end{proof}

\subsection{Hardness of TCTP}\label{sec:TCTP_NPhard}
We define the decision version TCTP-D of TCTP as follows: given a constant~$\beta \in \mathbb{Q}$, does there exist a schedule~$\sigma$ such that $z(\sigma) < \beta$? To show that TCSP-D reduces to TCTP-D in Theorem~\ref{th:DecProb_to_TCTP}, we apply a similar approach as the one used in \cite{hermans2019timely} to decompose and to bound a product of probabilities. In particular, we need the following lemma, which can be shown by induction.

\begin{lemma}[\citeauthor{hermans2019timely},~\citeyear{hermans2019timely}]\label{lem:decomposition}
	Let $r \in \mathbb{N}$ and $a_1, \ldots, a_r \in \mathbb{R}$, then
	\begin{equation*}
	\prod_{i = 1}^{r} (1-a_i) = 1 - \sum_{i = 1}^{r} a_i + \sum_{i = 1}^{r} \sum_{j = i + 1}^{r} a_i a_j \prod_{k = j + 1}^{r} (1-a_k).
	\end{equation*} 
\end{lemma}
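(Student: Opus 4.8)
The plan is to reduce the whole statement to a single telescoping identity and then to apply that identity twice. Concretely, the engine of the proof is the elementary fact that, for any lower index $s$,
\begin{equation*}
1 - \prod_{k=s}^{r}(1-a_k) = \sum_{j=s}^{r} a_j \prod_{k=j+1}^{r}(1-a_k),
\end{equation*}
with the usual empty-product convention $\prod_{k=r+1}^{r}(1-a_k)=1$. Recognizing that this telescoping structure is hiding inside the stated formula is, in my view, the main (and essentially the only) real obstacle; everything afterwards is bookkeeping of index ranges and of the empty-sum/empty-product conventions.

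First I would establish the telescoping identity. The key manipulation is to write $a_j = 1-(1-a_j)$, so that each summand becomes a difference of consecutive tail products,
\begin{equation*}
a_j \prod_{k=j+1}^{r}(1-a_k) = \prod_{k=j+1}^{r}(1-a_k) - \prod_{k=j}^{r}(1-a_k).
\end{equation*}
Summing over $j = s,\dots,r$ telescopes to $\prod_{k=r+1}^{r}(1-a_k) - \prod_{k=s}^{r}(1-a_k) = 1 - \prod_{k=s}^{r}(1-a_k)$, which is the claimed auxiliary identity.

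With this in hand, I would finish directly. Taking $s=1$ and rearranging gives $\prod_{i=1}^{r}(1-a_i) = 1 - \sum_{i=1}^{r} a_i \prod_{k=i+1}^{r}(1-a_k)$. Substituting the same identity, now with $s=i+1$, into each inner tail product $\prod_{k=i+1}^{r}(1-a_k) = 1 - \sum_{j=i+1}^{r} a_j \prod_{k=j+1}^{r}(1-a_k)$ and expanding yields exactly $1 - \sum_{i=1}^{r} a_i + \sum_{i=1}^{r}\sum_{j=i+1}^{r} a_i a_j \prod_{k=j+1}^{r}(1-a_k)$, which is the stated right-hand side. Alternatively, one can run the induction on $r$ suggested by the authors: the base case $r=1$ is immediate because the double sum is empty, and in the inductive step one factors off $(1-a_1)$, applies the induction hypothesis to $a_2,\dots,a_r$, and checks that the leftover discrepancy is precisely $a_1$ times the difference of the two sides of the telescoping identity with $s=2$, hence zero. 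In either route the only points requiring care are the boundary conventions at the ends of the index ranges.
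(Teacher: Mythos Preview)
Your proof is correct. The paper does not actually spell out a proof of this lemma: it merely notes that the identity ``can be shown by induction'' and cites the source. Your telescoping argument is a clean and self-contained alternative to that bare induction hint, and you also sketch the inductive route, so there is nothing to add.
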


\begin{theorem}\label{th:DecProb_to_TCTP}
	TCSP-D reduces to TCTP-D.
\end{theorem}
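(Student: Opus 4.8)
The plan is to turn a TCSP instance into a TCTP instance by keeping $m$, $n$, $T$, and the costs $c_j$ unchanged, while setting the success probabilities to $p_j \coloneqq 1 - \epsilon\pi_j$ for a suitably small rational $\epsilon \in (0,1)$. The guiding intuition is that, to first order in $\epsilon$, the product $\prod_{j \in \mathcal{A}_t}p_j$ behaves like $1 - \epsilon\,\pi(\mathcal{A}_t)$, which is precisely the probability $\sum_{k \geq t}\pi(S_k)$ that governs the search cost; the quadratic and higher-order terms form an error that becomes negligible relative to the linear term once $\epsilon$ is small enough. I would therefore show that $z(\sigma)$ equals, up to an increasing affine transformation, essentially $\bar z(\sigma)$, so that deciding $z(\sigma) < \beta$ for a carefully chosen $\beta$ becomes equivalent to deciding $\bar z(\sigma) \leq \alpha$.

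The first technical step is to expand the product using Lemma~\ref{lem:decomposition} with $a_j = \epsilon\pi_j$ over the index set $\mathcal{A}_t$, which gives
\begin{equation*}
\prod_{j \in \mathcal{A}_t}(1-\epsilon\pi_j) = 1 - \epsilon\,\pi(\mathcal{A}_t) + \epsilon^2 E_t(\sigma),
\end{equation*}
where $E_t(\sigma)$ collects the cross terms and, since each factor $\prod_{k}(1-\epsilon\pi_k) \leq 1$ and $\sum_{j \in N}\pi_j = 1$, satisfies $0 \leq E_t(\sigma) \leq \tfrac12\bigl(\pi(\mathcal{A}_t)\bigr)^2 \leq \tfrac12$. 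Multiplying by $c(S_t)$ and summing over $t$, and using the elementary identity $\sum_t c(S_t)\,\pi(\mathcal{A}_t) = c(N) - \bar z(\sigma)$ (which follows from $\sum_{k \geq t}\pi(S_k) = 1 - \pi(\mathcal{A}_t)$), I obtain the clean relation
\begin{equation*}
z(\sigma) = (1-\epsilon)\,c(N) + \epsilon\,\bar z(\sigma) + \epsilon^2 E(\sigma), \qquad E(\sigma) \coloneqq \sum_{t=1}^T c(S_t)E_t(\sigma) \in \Bigl[0,\tfrac{c(N)}{2}\Bigr].
\end{equation*}
Thus $z(\sigma)$ is an increasing affine function of $\bar z(\sigma)$ plus a controllable error bounded by $\epsilon^2 c(N)/2$.

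It remains to fix the parameters. Let $D$ be a common denominator of the $\pi_j$, so every value of $\bar z(\sigma)$ is an integer multiple of $1/D$; replacing $\alpha$ by $\lfloor \alpha D\rfloor/D$, I may assume $\alpha$ is such a multiple, whence $\bar z(\sigma) > \alpha$ forces $\bar z(\sigma) \geq \alpha + 1/D$. I would then take $\epsilon \coloneqq 1/(D\,c(N))$ and $\beta \coloneqq (1-\epsilon)c(N) + \epsilon\alpha + \epsilon/D$, both rationals of polynomial bit-length (the degenerate case $c(N)=0$ being decided trivially). For the forward direction, any $\sigma$ with $\bar z(\sigma) \leq \alpha$ satisfies $z(\sigma) \leq (1-\epsilon)c(N)+\epsilon\alpha+\epsilon^2 c(N)/2 < \beta$, the strict inequality being exactly $\epsilon^2 c(N)/2 < \epsilon/D$. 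For the converse, if $\bar z(\sigma) > \alpha$ for every schedule, then $E(\sigma) \geq 0$ gives $z(\sigma) \geq (1-\epsilon)c(N)+\epsilon(\alpha+1/D) = \beta$ for all $\sigma$, so no schedule meets the TCTP-D threshold; hence the two instances have the same answer.

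The crux of the argument, and the step I expect to require the most care, is the balancing of two opposing demands on $\epsilon$: it must be small enough that the quadratic error $\epsilon^2 c(N)/2$ is strictly dominated by the minimum separation $\epsilon/D$ between distinct values of the linear term $\epsilon\bar z(\sigma)$, yet its denominator must remain polynomial in the input so that the reduction stays polynomial. The integrality of the $c_j$ together with the bounded denominator $D$ of the $\pi_j$ is exactly what supplies the $1/D$ gap that makes this possible, and checking that $\epsilon = 1/(D\,c(N))$ simultaneously yields $\epsilon^2 c(N)/2 < \epsilon/D$ and has polynomial size is the heart of the reduction. The remaining verifications, namely that $p_j \in (0,1] \cap \mathbb{Q}$ and that the transformation is computable in polynomial time, are routine.
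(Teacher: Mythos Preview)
Your proposal is correct and follows essentially the same approach as the paper: both set $p_j = 1 - \epsilon\pi_j$ for a small rational $\epsilon$, expand the product via Lemma~\ref{lem:decomposition}, and choose $\epsilon$ so that the quadratic remainder is dominated by the $1/D$ granularity of $\bar z(\sigma)$. The only cosmetic difference is that you bound the error term by $\tfrac12(\pi(\mathcal{A}_t))^2\le\tfrac12$, whereas the paper counts the at-most-$n^2$ cross terms and accordingly takes a slightly smaller scaling factor; both bounds suffice.
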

\begin{proof}
	Given an instance of TCSP-D  $(\alpha, m, n, T, (c_j, \pi_j)_{j \in N})$, let~$W \in \mathbb{N}$ be the smallest integer such that~$w_j \coloneqq \pi_j W$ is integer for all~$j \in N$. Since all probabilities~$(\pi_j)_{j \in N}$ are rational numbers,~$W$ is polynomially bounded in the input size. Define an associated TCTP-D instance with $p_j = 1 - w_j / M$ for every $j \in N$, where $M \coloneqq c(N) (n W)^2$, and with all other parameters equal to the corresponding ones of the TCSP-D instance. Moreover, let $\gamma \coloneqq \left(c(N) - \alpha \right) W$ and $\beta = c(N) - (\gamma - 1) / M$.
	We want to show that there exists a schedule~$\sigma$ with $\bar{z}(\sigma) \leq \alpha$ if and only if there exists a schedule~$\sigma$ for the TCTP instance such that $z(\sigma) < \beta$.
	
	Let~$\sigma = (S_1, \ldots, S_T)$ be an arbitrary schedule, then it follows from Equation~\eqref{eq:expcost} and the definition of~$(p_j)_{j \in N}$ that
	\begin{equation}\label{eq:z_w}
		z(\sigma) = \sum_{t = 1}^T c(S_t) \prod_{j \in \mathcal{A}_{t}} \left(1 - \frac{w_j}{M}\right).
	\end{equation}
	Now define for every set~$S \subseteq N$ the number
	\begin{equation}
		Q(S) = \sum_{i \in S} \sum_{j \in S\colon j > i} \frac{w_i w_j}{M^2} \prod_{k \in S \colon k > j} \left(1-\frac{w_k}{M}\right), \label{eq:Q_S}
	\end{equation}
	and note that, by choice of~$M$, it holds for each $i,j,k \in S$ that
	\begin{equation*}
	0 \leq \frac{w_i w_j}{M} <  \frac{1}{c(N)n^2} \quad \text{ and } \quad 0 < 1 - \frac{w_k}{M} \leq 1.
	\end{equation*}
	Since Equation~\eqref{eq:Q_S} contains strictly less than~$n^2$ terms, this implies that $0 \leq Q(S) < 1/(Mc(N))$ for each $S \subseteq N$. Hence, by using Lemma~\ref{lem:decomposition} to rewrite Equation~\eqref{eq:z_w}, we obtain that
	\begin{align}
		z(\sigma) &= \sum_{t = 1}^T c(S_t) \left(1 - \frac{1}{M}w(\mathcal{A}_{t}) + Q(\mathcal{A}_{t})\right) \nonumber\\
		&< c(N) + \frac{1}{M} - \frac{1}{M} \sum_{t = 2}^T c(S_t) w(\mathcal{A}_{t}) \label{eq:z_UB}
	\end{align}
	and
	\begin{equation}
		z(\sigma) \geq c(N) - \frac{1}{M} \sum_{t = 2}^T c(S_t) w(\mathcal{A}_{t}). \label{eq:z_LB}
	\end{equation}
	
	If $\sum_{t = 2}^T c(S_t) w(\mathcal{A}_{t}) \geq \gamma$, then Inequality~\eqref{eq:z_UB} implies that $z(\sigma) < c(N) - (\gamma - 1) / M = \beta$. Conversely, if $\sum_{t = 2}^T c(S_t) w(\mathcal{A}_{t}) < \gamma$, then we also know that $\sum_{t = 2}^T c(S_t) w(\mathcal{A}_{t}) \leq \gamma - 1$ since $(c_j)_{j \in N}$ and $(w_j)_{j \in N}$ are integer. Inequality~\eqref{eq:z_LB} then yields that~$z(\sigma) \geq c(N) - (\gamma - 1) / M = \beta$. Hence, we obtain that 
	\begin{equation}\label{eq:compl_gamma}
	\sum_{t = 2}^T c(S_t) \sum_{k = 1}^{t-1} w(S_k) = \sum_{t = 2}^T c(S_t) w(\mathcal{A}_{t}) \geq \gamma = (c(N) - \alpha) W 
	\end{equation}
	if and only if $z(\sigma) < c(N) - (\gamma - 1) / M = \beta$.
	
	Finally, Equation~\eqref{eq:search_cost} combined with~$\pi(N) = \sum_{k=1}^T \pi(S_k) = 1$ and~$w_j = \pi_j W$ yields that 
	\begin{equation*}
		\left(c(N) - \bar{z}(\sigma)\right)W = \sum_{t=1}^T c(S_t) \left(\sum_{k=1}^T \pi(S_k) - \sum_{k=t}^T \pi(S_k) \right)W = \sum_{t = 2}^T c(S_t) \sum_{k = 1}^{t-1} w(S_k).
	\end{equation*}
	A schedule~$\sigma$ thus satisfies Inequality~\eqref{eq:compl_gamma} if and only if $\bar{z}(\sigma) \leq \alpha$, which completes the proof.
\end{proof}

TCTP-D is in NP since, for a given schedule~$\sigma$, we can use Equality~\eqref{eq:expcost} to verify in polynomial time whether~$z(\sigma) < \beta$. Hence, together with Theorems~\ref{th:DecProbNPcomp},~\ref{th:DecProbstrongNPcomp}, and~\ref{th:DecProb_to_TCTP}, we obtain this section's main result.

\begin{corollary}\label{col:TCTP_NPhard}
	TCTP-D is NP-complete even if $T = 2$. If~$T$ is part of the input, then the problem is strongly NP-complete for every fixed number~$m \geq 3$ of machines.
\end{corollary}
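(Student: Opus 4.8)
The plan is to treat this corollary as an assembly of the three preceding theorems together with the NP-membership argument already given in the paragraph above the statement. The only genuine content is to check that the reduction of Theorem~\ref{th:DecProb_to_TCTP} transfers the relevant parameters ($T$ and $m$) and the appropriate flavour of hardness, i.e.\ ordinary versus strong. Membership in NP needs no further work: for a given schedule~$\sigma$, Equation~\eqref{eq:expcost} lets us evaluate~$z(\sigma)$ and compare it with~$\beta$ in polynomial time, so it remains only to establish hardness of the two restrictions.

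For the claim that TCTP-D is NP-complete even when~$T = 2$, I would first observe that the construction in the proof of Theorem~\ref{th:DecProb_to_TCTP} leaves~$m$,~$n$, and~$T$ untouched (only the~$p_j$ and the threshold are newly defined), so it maps a TCSP-D instance with~$T = 2$ to a TCTP-D instance with~$T = 2$. I then need to confirm that the construction is a genuine polynomial-time many-one reduction: since~$W$ has polynomially many bits, so do~$M = c(N)(nW)^2$, the numbers~$p_j = 1 - w_j / M$, and~$\beta$, and all of them can be written down in polynomial time. Composing this reduction after Theorem~\ref{th:DecProbNPcomp} then yields NP-hardness of TCTP-D for~$T = 2$, which combined with NP-membership gives NP-completeness.

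For the strong NP-completeness with a fixed number~$m \geq 3$ of machines, the same construction preserves~$m$, and I would again invoke Theorem~\ref{th:DecProb_to_TCTP}, now composed after Theorem~\ref{th:DecProbstrongNPcomp}. The crux, and the only step requiring care, is that an ordinary polynomial-time reduction does not by itself transport strong NP-hardness; I must verify that the composition is a \emph{pseudo-polynomial transformation} in the sense of~\citet{garey2002computers}, i.e.\ that the largest numerical value in the constructed TCTP-D instance is bounded by a polynomial in the magnitude of the numbers of the underlying \textsc{3-Partition} instance and in the instance length. Tracing the construction through the specific instances produced by Theorem~\ref{th:DecProbstrongNPcomp}, where~$\pi_j = u_j / (qB)$ so that~$W \mid qB$ and hence~$W \leq qB$, I would bound~$M = c(N)(nW)^2 = O(q^5 B^3)$ and~$\gamma = (c(N) - \alpha)W = O(q^2 B^2)$, whence the numerators and denominators of all~$p_j$ and of~$\beta$ remain polynomial in~$q$ and~$B$. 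Since the magnitudes stay polynomially bounded, strong NP-hardness carries over, and together with NP-membership this gives strong NP-completeness.

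The main obstacle is precisely this last magnitude check. The probabilities~$p_j = 1 - w_j / M$ are pinned very close to~$1$ because~$M$ is deliberately taken large to dominate the error term~$Q(S)$ appearing in the proof of Theorem~\ref{th:DecProb_to_TCTP}, so I must make sure the denominator~$M$ grows only polynomially, and not exponentially, in~$q$ and~$B$. Once this bound is in place, the corollary follows by simply citing the three theorems.
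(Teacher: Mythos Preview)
Your proposal is correct and follows exactly the paper's approach: the paper's proof of the corollary is a one-line invocation of Theorems~\ref{th:DecProbNPcomp}, \ref{th:DecProbstrongNPcomp}, and~\ref{th:DecProb_to_TCTP} together with the NP-membership observation in the preceding paragraph. Your additional explicit verification that the composed reduction is a pseudo-polynomial transformation (bounding $W\leq qB$, $M=O(q^5B^3)$, etc.) is a detail the paper leaves implicit, but it does not constitute a different route---it simply fills in the magnitude check the paper takes for granted.
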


\section{Special case of two time slots}\label{sec:DP}
We now develop a pseudo-polynomial dynamic program for the special case of TCTP where~$T = 2$ and show how it can be converted into a fully polynomial-time approximation scheme (FPTAS). Together with Corollary~\ref{col:TCTP_NPhard} this yields that TCTP is weakly NP-hard when $T = 2$. Our approach is inspired by the standard pseudo-polynomial dynamic program for the knapsack problem \citep{kellerer2004knapsack}. An analogous procedure also solves TCSP with~$T=2$, but we omit its description for the sake of brevity.

If $T  = 2$, then every feasible schedule is uniquely defined by the set that is tested in the first time slot. For an optimal sequence~$\sigma^\star = (S^\star, N \setminus S^\star)$, we refer to the set~$S^\star$ as an \emph{optimal testing set}. We assume that $n = mT$, which is without loss of generality because one can always add dummy tests with cost~$0$ and success probability~$1$ until the instance satisfies this condition. The goal of our dynamic program is to identify such an optimal testing set through a backward recursion based on the index number~$i \in \{1,\ldots, n\}$ of a job.

For every \emph{stage}~$i \in \{1, \ldots, n\}$ and \emph{state} $(b, s) \in \{0,\ldots, c(N)\}  \times \{0,\ldots, m\}$, we define the \emph{value function} $v_i(b,s)$ as the minimum joint probability~$\prod_{j \in S} p_j$ of a set~$S \subseteq \{i, \ldots, n\}$ with cost~$c(S) = b$ and cardinality~$\vert S \vert = s$. That is, among the jobs corresponding to the current stage and the higher-indexed stages, we want to select a subset such that, if we schedule these jobs in the first time slot, then the probability that we need to perform the tests in the second time slot is minimal. Here, the state~$(b,s)$ specifies the cost and the number of elements that this subset should attain. If~$b$ and~$s$ are such that no set $S \subseteq \{i, \ldots, n\}$ with~$c(S) = b$ and~$\vert S \vert = s$ exists, then we define~$v_i(b,s) \coloneqq +\infty$.

We now discuss how to compute the value function~$v_{i}(b,s)$ for every stage and state recursively. As the boundary condition, let~$v_{n + 1}(b,s) \coloneqq 1$ if $b = s =0$ and $v_{n + 1}(b,s) \coloneqq +\infty$ otherwise. For every stage~$i \in \{1, \ldots, n\}$ and state $(b, s) \in \{0,\ldots, c(N)\}  \times \{0,\ldots, m\}$, it then holds that
\begin{equation}
	v_i(b,s) = \min\{p_i v_{i + 1}(b - c_i, s - 1),\, v_{i + 1}(b,s)\} \label{eq:recursion}
\end{equation}
if $v_{i + 1}(b - c_i, s - 1)$ is defined and finite, and $v_i(b,s) = v_{i + 1}(b,s)$ otherwise.
Here, the option~$p_i v_{i + 1}(b - c_i, s - 1)$ corresponds to selecting test~$i$ in that stage and state, whereas the other option refers to not selecting test~$i$. Starting from stage~$n$, Equation~\eqref{eq:recursion} allows us to compute the value function~$v_1(b,s)$ for every state~$(b,s)$ through a backward recursion. By keeping track of the selected tests, we also obtain a testing set that minimizes the joint probability.

Once we know the value function~$v_1(b,m)$ for every~$b \in \{0,1,\ldots, c(N)\}$, we can solve TCTP with $T = 2$ by enumerating over all these possible costs~$b$ and selecting one~$b^\star$ that minimizes~$b + v_{1}(b,m) (c(N) - b)$. The accompanying set of tests that attains this cost~$b^\star$ with minimum joint probability~$v_{1}(b^\star,m)$ then forms an optimal testing set. Observe that we can indeed limit ourselves to states~$(b,s)$ with $s = m$ in the first stage since we know that exactly~$m$ jobs need to be scheduled in the first time slot as a consequence of our assumption that~$n = mT = 2m$. 

Since the computational effort depends on the number~$c(N)$, the dynamic program has a pseudo-polynomial running time. By appropriately rounding the costs~$(c_j)_{j \in N}$, however, we can convert this pseudo-polynomial dynamic program into a FPTAS. This procedure is analogous to the standard scaling approach for the knapsack problem \citep{kellerer2004knapsack}. Theorem~\ref{th:fptas}, proven in Appendix~\ref{app:fptas}, summarizes this result.

\begin{theorem}\label{th:fptas}
	Given a TCTP instance~$(m, n, T, (c_j, p_j)_{j \in N})$ with~$T = 2$, let~$\sigma^\star = (S^\star, N \setminus S^\star)$ be an optimal testing sequence. For each $\varepsilon > 0$, we can obtain a sequence $\sigma_\varepsilon = (S_\varepsilon, N \setminus S_\varepsilon)$ with $z(\sigma_\varepsilon) \leq (1 + \varepsilon) z(\sigma^\star)$ in time $O(n^5 / \varepsilon)$.
\end{theorem}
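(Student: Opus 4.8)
The plan is to follow the standard scaling paradigm for knapsack: round the costs to a coarser granularity, run the pseudo-polynomial dynamic program on the rounded instance, and argue that the resulting schedule is near-optimal for the original instance. Throughout, I write a $T=2$ schedule as $\sigma=(S,N\setminus S)$ with $|S|=m$, so that Equation~\eqref{eq:expcost} reads $z(\sigma)=c(S)+\bigl(\prod_{j\in S}p_j\bigr)c(N\setminus S)$. The only quantity that makes the recursion~\eqref{eq:recursion} pseudo-polynomial is the cost dimension $b$, ranging over $\{0,\dots,c(N)\}$, and the whole point of scaling is to shrink this range.

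First I would fix a scaling parameter $K>0$, replace every cost $c_j$ by the rounded-down value $K\lfloor c_j/K\rfloor$, and observe that lowering costs can only lower $z$, while for any fixed schedule the decrease is at most $\bigl(|S|+\prod_{j\in S}p_j\,|N\setminus S|\bigr)K\le 2mK=nK$. A standard sandwiching argument then shows that an optimal schedule $\tilde\sigma$ for the rounded instance satisfies $z(\tilde\sigma)\le z(\sigma^\star)+nK$. To turn this additive error into a multiplicative $(1+\varepsilon)$ guarantee it suffices to pick $K=\varepsilon L/n$ for some lower bound $L\le z(\sigma^\star)$, since then $nK=\varepsilon L\le\varepsilon z(\sigma^\star)$. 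Running the dynamic program on the integer costs $\lfloor c_j/K\rfloor$ and rescaling by $K$ produces $\sigma_\varepsilon=\tilde\sigma$.

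The key difficulty is the running time, because---unlike in knapsack---a valid lower bound $L$ on $z(\sigma^\star)$ need not control $c(N)$: Example~\ref{ex:ratio} already shows that $c(N)$ can exceed $z(\sigma^\star)$ by an unbounded factor (a cheap but unreliable job in the first slot discounts an arbitrarily expensive job left for the second), so the rounded total $\sum_j\lfloor c_j/K\rfloor$ may still be super-polynomial. I would resolve this by \emph{capping} the cost dimension. Since $z(\sigma)\ge c(S)$ for every schedule, the optimal first-slot cost obeys $c(S^\star)\le z(\sigma^\star)\le U$ for any upper bound $U$ (and likewise for the rounded optimum), so it is safe to restrict the state $(b,s)$ to $b\le\lceil U/K\rceil$ and to force every job with $c_j>U$ into the second slot; that at most $m$ such jobs exist follows from $U\ge z(\sigma^\star)$, since otherwise one would land in the first slot and force $z\ge c(S)>U$. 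With this cap the rounded program runs in $O\bigl(n\cdot(U/K)\cdot m\bigr)=O\bigl(n^3U/(\varepsilon L)\bigr)$ time.

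It remains to supply bounds $L\le z(\sigma^\star)\le U$ with ratio $U/L=O(n^2)$, which is the technical heart of the argument and the step I expect to be the main obstacle. For the lower bound I would combine two easy estimates, each provably at most $z(\sigma^\star)$: the sum of the $m$ cheapest costs (a lower bound on $c(S^\star)$), and $\max_j\bigl(\prod_{i\ne j}p_i\bigr)c_j$ (each job $j$ contributes at least $(\prod_{i\ne j}p_i)c_j$ to $z$, by the same reasoning as in Lemma~\ref{lem:trivial_approx}). The delicate part is the matching upper bound: one must exhibit a concrete schedule---for instance one that places cheap, unreliable jobs first so as to discount the expensive jobs deferred to the second slot---whose expected cost stays within a polynomial factor of these lower bounds. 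Once such a pair $(L,U)$ with $U=O(n^2)L$ is available, setting $K=\varepsilon L/n$ yields the advertised $O(n^5/\varepsilon)$ running time, and the error analysis above gives $z(\sigma_\varepsilon)\le(1+\varepsilon)z(\sigma^\star)$.
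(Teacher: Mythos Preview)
Your plan correctly isolates the central difficulty---that $c(N)$ can exceed $z(\sigma^\star)$ by an unbounded factor, so a single cost scaling does not automatically yield a polynomial DP range---but it stops precisely at the step that resolves it. You say that ``once such a pair $(L,U)$ with $U=O(n^2)L$ is available'' the result follows, yet you never construct $U$; you only sketch two candidate lower bounds and then concede that ``the delicate part is the matching upper bound''. This is the whole content of the theorem, and your proposal leaves it open. In fact, neither of your two lower bounds is obviously within a polynomial factor of a computable upper bound in general: the sum of the $m$ cheapest costs can be zero even when $c(S^\star)>0$, and $\max_j\bigl(\prod_{i\ne j}p_i\bigr)c_j$ can be driven to zero by a couple of low-probability jobs while $z(\sigma^\star)$ stays bounded away from zero. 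Producing a single schedule whose cost is provably $O(n^2)$ times either of these quantities is not a routine step.

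The paper sidesteps the search for a global $(L,U)$ pair altogether via a \emph{guessing} argument. After sorting so that $c_1\ge\cdots\ge c_n$, it runs the rounded DP $n$ times, once for each candidate index $i$ of the most expensive job in $S^\star$. For the correct guess $i\in S^\star$ two things happen simultaneously: (i) $z(\sigma^\star)\ge c(S^\star)\ge c_i$, so the scaling $\mu^{(i)}=\varepsilon c_i/n$ yields additive error $n\mu^{(i)}=\varepsilon c_i\le\varepsilon z(\sigma^\star)$; and (ii) since $S^\star\subseteq\{i,\dots,n\}$ and $c_j\le c_i$ for all $j\ge i$, the rounded costs satisfy $\sum_{j\ge i}\lfloor c_j/\mu^{(i)}\rfloor\le n^2/\varepsilon$, which bounds the DP range directly---no separate upper bound $U$ is needed. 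Taking the best of the $n$ outputs then gives the $O(n^5/\varepsilon)$ bound. This guessing trick is exactly the missing idea in your outline: it replaces the elusive polynomial-ratio pair $(L,U)$ by $n$ easy subproblems, one of which is guaranteed to be well-scaled.
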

\begin{proof}
	See Appendix~\ref{app:fptas}.
\end{proof}

\section{Mixed integer programming formulations}\label{sec:form}
In this section, we describe a partial-order-based and an assignment-based mixed integer programming formulation for both TCTP and TCSP.\@ We first address the testing variant in detail and then indicate how the corresponding formulation can be adapted to also solve the search variant. The partial-order-based formulation can be easily modified to also incorporate precedence constraints and, as such, it can be used to solve the precedence-constrained sequential testing problem \citep{monma1979sequencing}. The assignment-based formulation, in turn, can be modified to incorporate a fixed cost for every used time slot and thus to solve the batch-testing problem \citep{daldal2016approximation}. This illustrates that our approach, based on a linearization of the objective function, is sufficiently general to be useful to solve related testing problems as well.

\subsection{Partial-order-based formulation}\label{sec:po_form}
Inspired by \citet{Potts1980} and \citet{wagner2001discrete}, our first formulation relies on binary decision variables that indicate the partial order defined by a schedule. Additionally, we include decision variables that keep track of the probability that a certain component (or location) needs to be tested (or searched). 

\subsubsection{Testing variant}
For every two tests~$i,j \in N$, define a binary decision variable~$\delta_{ij}$ that equals 1 if test~$i$ is performed before test~$j$ and 0 otherwise, and a binary decision variable~$\mu_{\{i,j\}}$ that equals 1 if tests~$i$ and~$j$ are performed simultaneously. The brackets in the subscript of~$\mu_{\{i,j\}}$ indicate that the order of indices~$i$ and~$j$ is of no importance. Next, for every test~$i \in N$ and integer~$j \in \{ 0, \ldots, n \}$, define a continuous decision variable~$\alpha_{ij}$ that equals the probability that test~$i$ needs to be performed, conditional on the information that all tests~$j + 1, \ldots, n$ will be successful. That is, given a set $S \subset N$ of tests performed before test~$i \in N$, we want for every $j=0,\ldots, n$ that~$\alpha_{ij} = \prod_{l \in S \colon l \leq j} p_l$ and, therefore, that $\alpha_{in}  = \prod_{l \in S} p_l$ equals the probability that test~$i$ needs to be performed.

The following mixed integer program then constitutes a valid formulation for TCTP if~$n = mT$.
\begin{subequations}
	\begin{align}
	& \text{min} & \sum_{i \in N} c_i \alpha_{in}  \label{mip:po_obj}\\
	&\text{s.t.} & \delta_{ij} + \delta_{ji} + \mu_{\{i,j\}} & =  1 & & \text{$\forall\, i,j \in N\colon i \neq j$} \label{mip:po_order}\\		
	& &  \mu_{\{i,j\}}+\delta_{ij} + \delta_{jk} - \delta_{ik} & \leq 1  & & \text{$\forall\, i,j,k \in N\colon i \neq j \neq k \neq i$} \label{mip:po_transd}\\
	& & \sum_{j \in N \setminus \{i\}} \mu_{\{i,j\}} &= m - 1 & & \text{$\forall\, i \in N$} \label{mip:po_machcap}\\
	& & \alpha_{i0} &= 1 & & \text{$\forall\, i\in N$} \label{mip:po_y0}\\
	& & \alpha_{ij} &\geq \alpha_{i,j-1} - \delta_{ji} & & \text{$\forall\, i\in N$ and $j \in \{1,\ldots, n\}$} \label{mip:po_y}\\
	& & \alpha_{ij} &\geq p_{j} \alpha_{i,j-1} & & \text{$\forall\, i\in N$ and $j \in \{1,\ldots, n\}$} \label{mip:po_yp}\\
	& &  \delta_{ij},\, \mu_{\{i,j\}}  &  \in \{0,1\} & & \text{$\forall\, i,j \in N$} \label{mip:po_bin}
	\end{align}	
\end{subequations}

Objective function~\eqref{mip:po_obj} relates the probability that a test needs to be performed with its cost. Constraints~\eqref{mip:po_order}-\eqref{mip:po_machcap} together with the binary constraints enforce that the variables~$\delta_{ij}$ and~$\mu_{\{i,j\}}$ define a partial order of the tests such that in each time slot exactly~$m$ tests are performed.  Constraints~\eqref{mip:po_y0}-\eqref{mip:po_yp} and the non-negative coefficients~$c_i$ in the objective function imply that~$\alpha_{ij}$ equals the probability that test~$i$ needs to be performed conditional on the information that all tests~$j + 1, \ldots, n$ will be successful, as desired. Indeed, if~$\alpha_{i,j-1}$ equals the probability that test~$i$ needs to be performed conditional on the information that all tests~$j, \ldots, n$ will be successful, then~$\alpha_{ij}$ equals~$p_{j} \alpha_{i,j-1}$ if test~$j$ is performed before test~$i$, and~$\alpha_{ij} = \alpha_{i,j-1}$ otherwise.

\subsubsection{Search variant}
For TCSP we adopt the same definition for~$\delta_{ij}$ and~$\mu_{\{i,j\}}$,  but now we directly let~$\alpha_{j}$ reflect the probability that location~$j \in N$ needs to be searched. That is,~$\alpha_{j}$ equals the probability that the target has not been found at the time that location~$j$ is searched, and thus that the cost $c_j$ needs to be paid. Hence, if we consider the objective function
\begin{align*}
\text{min}\, \sum_{i \in N} c_i \alpha_{i}
\end{align*}
and the constraints
\begin{align*}
& & \pi_i + \sum_{j \in N \setminus \{i\}} \pi_j(\mu_{\{i,j\}} + \delta_{ij}) &= \alpha_i & & \text{$\forall\, i \in N$}
\end{align*}
instead of Objective function~\eqref{mip:po_obj} and Constraints~\eqref{mip:po_y0}-\eqref{mip:po_yp} in the partial-order-based formulation for TCTP, then we obtain a valid formulation for TCSP.\@ Observe that also for TCSP, the assumption that~$n=mT$ is without loss of generality because we can add dummy locations with zero cost and zero probability.

\subsubsection{Incorporating precedence constraints}
Consider a generalization of TCTP or TCSP in which there are \emph{precedence constraints} defined by a strict partial order~$A \subset N \times N$ on~$N$, where~$(i,j) \in A$ reflects that component~$j \in N$ can only be tested or searched if component~$i \in N$ has been tested or searched before. Such constraints can easily be incorporated in our partial-order-based formulations by just adding a constraint $\delta_{ij} = 1$ for each $(i,j) \in A$.

By additionally requiring that~$m = 1$ and $T=n$, we obtain a compact mixed integer program for the precedence-constrained sequential testing problem. \cite{rostami2019sequential} describe a dynamic program and a branch-and-price algorithm for the precedence-constrained sequential testing problem for $n$-out-of-$n$ systems. Their branch-and-price approach also relies on a related linear-order-based formulation, but the authors use a Dantzig-Wolfe decomposition with exponentially many decision variables to linearize the objective.

\subsection{Assignment-based formulation}\label{sec:ab_form}
In the MIP formulations below, we use binary decision variables that assign the tests (or searches) to the time slots. Additionally, we include decision variables that keep track of the probability that the system's state is still unknown (or that the target is not found) at the beginning of a time slot.

\subsubsection{Testing variant}
For each test $j \in N$ and time slot~$t \in \{1,\ldots,T\}$, define a binary decision variable~$x_{jt}$ that equals 1 if test~$j$ is scheduled in time slot~$t$ and 0 otherwise, and a continuous decision variable~$y_{j}$ equal to the probability that test~$j$ will actually be performed. Next, for each~$j = 0, \ldots, n$ and~$t \in \{1,\ldots,T\}$, define a continuous decision variable~$z_{jt}$ equal to the probability that all tests scheduled before time~$t$ are successful, conditional on the information that all tests~$i > j$ in time slot~$t-1$ are successful. Hence,~$z_{nt}$ equals the probability that the tests scheduled in time slot~$t$ need to be performed. This leads to the following MIP formulation for TCTP.\@
\begin{subequations}
	\begin{align}
	& \text{min} & \sum_{j \in N} c_j y_j  \label{mip:ab_obj}\\
	&\text{s.t.} & \sum_{t=1}^T x_{jt} & =  1 & & \text{$\forall\, j \in N$} \label{mip:ab_test}\\		
	& & \sum_{j \in N} x_{jt} & \leq m & & \text{$\forall\, t = 1, \ldots, T$} \label{mip:ab_mach}\\
	& & y_j &\geq z_{nt} - 1 + \sum_{s = 1}^tx_{js} & & \text{$\forall\, j \in N$ and $t = 1,\ldots, T$} \label{mip:ab_y}\\
	& &  z_{j1} & = 1  & & \text{$\forall\, j =0,\ldots,n$} \label{mip:ab_z1}\\
	& & z_{0t} &= z_{n,t-1} & & \text{$\forall\, t = 2, \ldots, T$} \label{mip:ab_ztrans}\\
	& & z_{jt} &\geq z_{j-1,t} - x_{j,t-1} & & \text{$\forall\, j = 1, \ldots,n$ and $t = 2,\ldots, T$} \label{mip:ab_z}\\
	& & z_{jt} &\geq p_{j} z_{j-1,t} & & \text{$\forall\, j = 1, \ldots,n$ and $t = 2,\ldots, T$} \label{mip:ab_zp}\\
	& &  x_{jt} &  \in \{0,1\} & & \text{$\forall\, j \in N$ and $t = 1, \ldots, T$} \label{mip:ab_bin}
	\end{align}	
\end{subequations}

Objective function~\eqref{mip:ab_obj} combines the probability that a test needs to be performed with its cost. Constraints~\eqref{mip:ab_test}-\eqref{mip:ab_mach} together with the binary constraints enforce that every test is scheduled in exactly one time slot, and that at most~$m$ tests are scheduled within a time slot. Constraints~\eqref{mip:ab_y} together with the objective function and the fact that~$z_{nt}$ is non-increasing in~$t$ make sure that if test~$j$ is scheduled in or before time slot~$t$, it will be performed with a probability of at least~$z_{nt}$. Constraints~\eqref{mip:ab_z1}-\eqref{mip:ab_zp}, finally, enforce  that the probability~$z_{jt}$ equals~$p_j z_{j-1,t}$ if test~$j$ is scheduled in time slot~$t-1$, and~$z_{j-1,t}$ otherwise.

\subsubsection{Search variant}
For TCSP we adopt the same definition for~$x_{jt}$ and~$y_j$, but now we directly let~$z_t$ reflect the probability that the locations scheduled in time slot~$t \in \{1,\ldots, T\}$ need to be searched. That is,~$z_t$ equals the probability that the object is hidden in one of the locations searched at time~$t$ or later. Hence, if we consider the constraints
\begin{align*}
& & y_{j} &\geq z_{t} - 1 + \sum_{s=1}^tx_{js} & & \text{$\forall\, j \in N$ and $t = 1,\ldots, T$}\\
& & z_{t} &= \sum_{k=t}^T \sum_{j \in N} \pi_j x_{jk} & & \text{$\forall\, t = 2, \ldots, T$}
\end{align*}
instead of Constraints~\eqref{mip:ab_y}-\eqref{mip:ab_zp} in the assignment-based formulation for TCTP, then we obtain a valid formulation for the~TCSP.\@

\subsubsection{Incorporating a fixed cost per used time slot}
Consider a generalization of TCTP (or TCSP) in which there is a, possibly time-dependent, fixed cost~$\beta_t \in \mathbb{R}_{\geq 0}$ to test (or search) a batch of components (or locations) at time~$t \in \{1,\ldots, T\}$. This cost~$\beta_t$ could reflect a fixed start-up cost, but possibly also a penalty for not knowing the system's state at time~$t$. We can incorporate such fixed costs into the assignment-based formulation for TCTP by introducing for each time slot~$t \in \{1,\ldots, T\}$ a decision variable~$u_t$ that equals the probability that one or more tests will take place in time slot~$t$. We then add to the objective function~\eqref{mip:ab_obj} an additional summation $\sum_{t = 1}^T \beta_t u_{t}$, and we also include the constraints
\begin{equation*}
u_t \geq z_{nt} - 1 + \sum_{j \in N} x_{jt}
\end{equation*}
and~$u_t \geq 0$ for every~$t \in \{1,\ldots, T\}$. An analogous approach allows us to include fixed costs into the assignment-based formulation for TCSP as well. 

By setting~$m=T=n$ such that there is no deadline or constraint on the number of tests that can be performed in parallel, we obtain a compact exact mixed integer program for the batch-testing problem. \citet{daldal2016approximation} also describe an assignment-based formulation for this problem, but they approximate the objective function to obtain linearity.

\section{Local search heuristic}\label{sec:heur}
We now describe a local search procedure for TCTP and TCSP that exploits the fact that, for a given partition, we can determine an optimal schedule efficiently (Lemma~\ref{lem:struct_prop}). In particular, we represent a solution by means of an unordered partition~$\{S_1,\ldots, S_T\}$ of the set~$N$ into~$T$ sets with  at most~$m$ elements each, and translate this partition into a schedule by sequencing the sets in non-decreasing order of their cost-to-probability ratio.

The local search procedure for TCTP works as follows. Starting from a partition, a move consists of either the pairwise interchange between tests in different sets or the insertion of a test in another time slot in which less than~$m$ machines are being used. Moreover, after each interchange or insertion, the sets are again sorted in non-decreasing order of their cost-to-probability ratio. If there exists such a move that leads to a strict improvement, then we restart the procedure with this new partition. Otherwise, the partition defines a \emph{locally optimal testing sequence}~$\sigma^\text{ls} = (S^\text{ls}_1, \ldots, S^\text{ls}_T)$ and the procedure ends. The local search for TCSP works in a completely analogous fashion.

As illustrated by the following example, this local search procedure does not provide a constant-factor approximation guarantee for TCTP in general. The locality gap can in particular be~$\Omega(m)$, even if~$T=2$. The example does not seem to be adaptable to TCSP, as in this latter problem all probabilities add up to one (i.e., $\sum_{j \in N} \pi_j = 1$).

\begin{example}
	For given constants~$c, k, M \in \mathbb{N}$ all strictly greater than~1, define $p = (c+kM)^{-1/k}$. Consider a TCTP instance with~$m = 2k$ machines, deadline~$T = 2$, and the following~$n = 2k + 1$ tests:
	\begin{itemize}
		\item $I_1 \coloneqq \{1,\ldots, k\}$ with~$p_i = p$ and~$c_i = c-1$ for every~$i \in I_1$;
		\item $I_2 \coloneqq \{k+1,\ldots, 2k\}$ with~$p_i = 1$ and~$c_i = M$ for every~$i \in I_2$;
		\item $i^\star \coloneqq 2k+1$ with~$p_{i^\star} = 0$ and~$c_{i^\star} = c$.
	\end{itemize}
	For given values of~$c$ and~$k$, we will derive conditions for the parameter~$M$ such that the schedule~$\sigma^\text{ls} = (I_1, \{i^\star\} \cup I_2)$ is locally optimal and the schedule~$\sigma^\star = (\{i^\star\}, I_1\cup I_2)$ is globally optimal. Since, using $p = (c+kM)^{-1/k}$ and $m = 2k$,
	\begin{equation*}
	z(\sigma^\text{ls}) = k (c-1) + p^k \left(c + k M\right)  = k (c-1) + 1 = \frac{m}{2}(c-1) + 1,
	\end{equation*}
	whereas $z(\sigma^\star) = c$, this would then yield the~$\Omega(m)$ locality gap.
	
	Observe that  the potentially interesting moves from~$\sigma^\text{ls}$ are to either interchange a test in~$I_1$ with~$i^\star$ or to move a test in~$I_1$ to the second time slot. Interchanging a test in~$I_1$ with the test~$i^\star$ leads to a cost $(k-1)(c-1) + c = k(c-1) + 1 = z(\sigma^\text{ls})$ and, as such, it does not yield a strict improvement. Moving a test in~$I_1$ to the second time slot, in turn, leads to a cost $(k-1)(c-1) + p^{k-1} (c + c - 1 + kM)$. By substituting the definition of~$p$, we thus obtain that~$\sigma^\text{ls}$ is a local optimum if
	\begin{equation*}
	(k-1)(c-1) + \frac{2c - 1 + kM}{c+kM} (c+kM)^{1/k} \geq k (c-1) + 1,
	\end{equation*}
	which is equivalent to requiring that
	\begin{equation*}
	\frac{c^2+ckM}{2c - 1 + kM} (c+kM)^{-1/k} \leq 1.
	\end{equation*}
	Since the left-hand side of this latter inequality tends to zero as~$M$ increases, there exists a value $M \in \mathbb{N}_0$ such that this inequality is satisfied. For every~$k,c \in \mathbb{N}_0$, we can thus choose~$M$ sufficiently large such that~$\sigma^\text{ls}$ is a local optimum, which establishes the~$\Omega(m)$ locality gap.
\end{example}

\section{Computational experiments}\label{sec:comput}
This section reports on the computational performance of our proposed solution methods. After describing the dataset of test instances and the implementation details (Section~\ref{sec:implementation}), we discuss the performance of our mixed integer programming formulations (Section~\ref{sec:perf_mip}) as well as of our local search heuristic (Section~\ref{sec:perf_ls}).

\subsection{Instance description and implementation details}\label{sec:implementation}
A set of TCTP and TCSP instances for different values of~$m$ and~$T$, with $n = m T$, was randomly generated as follows. For each~$j \in N$, an integer cost~$c_j$ and weight~$w_j$ was drawn randomly, i.e., with uniform probability, from the interval~$[0, 10]$ and~$[0,1000]$, respectively. For TCSP, we then let~$\pi_j = w_j / w(N)$ for each~$j \in N$. For TCTP, in turn, a joint success probability~$q = \prod_{j \in N} p_j$ was first drawn randomly from the interval~$[0.01,0.30]$,~$[0.31,0.60]$, or~$[0.61,0.90]$ depending on the experimental setting. This probability was subsequently distributed randomly over the tests by setting $\log(p_j) = \log(q) w_j / w(N)$ for each~$j \in N$. By controlling the joint success probability, we avoid that the product~$\prod_{j \in N} p_j$ becomes very small as~$n$ grows, which could cause numerical issues. For each setting,~$10$ instances were generated, leading to~$30$ TCTP and~$10$ TCSP instances for each $(m,T)$-combination.

All our algorithms were implemented using the C++ programming language, compiled with Microsoft Visual C++ 14.0, and run using an Intel Core i7-4790 processor with 3.60 GHz CPU speed and 8 GB of RAM under a Windows 10 64-bit OS.\@ To solve the instances, we used the commercial solver IBM ILOG CPLEX 12.8 with a single thread and a time limit of~30 minutes. Lazy cuts were used to implement Constraints~\eqref{mip:po_transd} of the partial-order-based formulation, and the solution provided by the local search procedure of Section~\ref{sec:heur} was used as a warm start. Apart from this, all CPLEX parameters were set to their default values. 

To obtain a starting solution for each instance, our local search procedure was run with three different initializations; the resulting solutions were all provided to the solver. For TCTP, the construction of an initial partition was based on either the costs~$c_j$, the probabilities~$p_j$, or the ratios~$c_j / (1-p_j)$, respectively. In particular, all tests were first sorted in non-decreasing value of the corresponding quantity (where those tests~$j \in N$ with~$p_j = 1$ are put first in this order if~$c_j = 0$, and last otherwise), and an initial partition was subsequently obtained by filling the different time slots using the resulting order. For TCSP, an analogous approach based on the costs~$c_j$, probabilities~$\pi_j$, or ratios~$c_j / \pi_j$ was used.

In order to evaluate the computational performance of our formulations, we use the average CPU time in seconds, the number of solved instances within the 30-minute time limit, the average LP gap, and the average final gap. Here, the LP gap equals the percentage ratio between the value of the optimal solution and the lower bound provided by the linear program obtained by relaxing the integrality constraints in the corresponding MIP formulation. The final gap, in turn, equals the percentage by which the best found solution exceeds the current global lower bound when the 30-minute time limit is reached. The average CPU time and LP gap are computed using solved instances only, whereas the average final gap is computed using unsolved instances only. 
If all instances for a given $(m,T)$-combination remained unsolved by a method, then no instances with higher values for~$m$ and~$T$ were attempted with that method.

\subsection{Performance of mixed integer programming formulations}\label{sec:perf_mip}

Table~\ref{tab:comp_res} displays the computational performance of our~TCTP and~TCSP formulations. The first and most remarkable aspect emerging from this table is that TCSP appears to be considerably easier to solve than TCTP.\@ While we can solve almost all TCSP instances with $n = mT\leq 40$, regardless of the value for~$m$ and~$T$, we have to settle for a much smaller instance size $n\leq 16$ for TCTP.\@ Somewhat surprisingly, we also find that the assignment-based formulation performs better than the partial-order-based formulation for TCTP, whereas this is the other way round for TCSP.\@

\begin{table}\footnotesize
	\setlength\tabcolsep{4pt}
	\centering
	\renewcommand{\arraystretch}{0.6}
	\caption{Computational performance of partial-order-based and assignment-based formulations for TCTP and TCSP.\@ The averages for the CPU time (in seconds) and percentage LP gap (\%LP) are computed using solved instances only (\# out of 30 for TCTP and out of 10 for TCSP, with a 30-minute time limit). The average final gap (\%fin) is computed using unsolved instances only.} \label{tab:comp_res}
	\begin {tabular}{rrrrcrrrcrrrcrrrcr}%
	\toprule & & \multicolumn {8}{c}{Time-critical testing problem} & \multicolumn {8}{c}{Time-critical search problem}\\ \cmidrule (lr){3-10} \cmidrule (lr){11-18} & & \multicolumn {4}{c}{Partial order} & \multicolumn {4}{c}{Assignment} & \multicolumn {4}{c}{Partial order} & \multicolumn {4}{c}{Assignment} \\ \cmidrule (lr){3-6} \cmidrule (lr){7-10} \cmidrule (lr){11-14} \cmidrule (lr){15-18}$m$&$T$&\multicolumn {1}{c}{CPU}&\multicolumn {1}{r}{\#}&\multicolumn {1}{c}{\%LP}&\multicolumn {1}{c}{\%fin}&\multicolumn {1}{c}{CPU}&\multicolumn {1}{r}{\#}&\multicolumn {1}{c}{\%LP}&\multicolumn {1}{c}{\%fin}&\multicolumn {1}{c}{CPU}&\multicolumn {1}{r}{\#}&\multicolumn {1}{c}{\%LP}&\multicolumn {1}{c}{\%fin}&\multicolumn {1}{c}{CPU}&\multicolumn {1}{r}{\#}&\multicolumn {1}{c}{\%LP}&\multicolumn {1}{c}{\%fin}\\\midrule %
	\pgfutilensuremath {2}&\pgfutilensuremath {2}&\pgfutilensuremath {0.1}&\pgfutilensuremath {30}&\pgfutilensuremath {27.08}&&\pgfutilensuremath {0.1}&\pgfutilensuremath {30}&\pgfutilensuremath {24.45}&&\pgfutilensuremath {0.1}&\pgfutilensuremath {10}&\pgfutilensuremath {0.06}&&\pgfutilensuremath {0.0}&\pgfutilensuremath {10}&\pgfutilensuremath {17.22}&\\%
	\pgfutilensuremath {2}&\pgfutilensuremath {3}&\pgfutilensuremath {0.1}&\pgfutilensuremath {30}&\pgfutilensuremath {26.68}&&\pgfutilensuremath {0.1}&\pgfutilensuremath {30}&\pgfutilensuremath {55.48}&&\pgfutilensuremath {0.1}&\pgfutilensuremath {10}&\pgfutilensuremath {0.63}&&\pgfutilensuremath {0.1}&\pgfutilensuremath {10}&\pgfutilensuremath {38.80}&\\%
	\pgfutilensuremath {2}&\pgfutilensuremath {4}&\pgfutilensuremath {0.2}&\pgfutilensuremath {30}&\pgfutilensuremath {25.77}&&\pgfutilensuremath {0.2}&\pgfutilensuremath {30}&\pgfutilensuremath {67.01}&&\pgfutilensuremath {0.1}&\pgfutilensuremath {10}&\pgfutilensuremath {0.56}&&\pgfutilensuremath {0.1}&\pgfutilensuremath {10}&\pgfutilensuremath {51.24}&\\%
	\pgfutilensuremath {2}&\pgfutilensuremath {5}&\pgfutilensuremath {14.7}&\pgfutilensuremath {30}&\pgfutilensuremath {24.02}&&\pgfutilensuremath {4.5}&\pgfutilensuremath {30}&\pgfutilensuremath {76.34}&&\pgfutilensuremath {0.1}&\pgfutilensuremath {10}&\pgfutilensuremath {0.45}&&\pgfutilensuremath {0.4}&\pgfutilensuremath {10}&\pgfutilensuremath {60.45}&\\%
	\pgfutilensuremath {2}&\pgfutilensuremath {6}&\pgfutilensuremath {113.9}&\pgfutilensuremath {29}&\pgfutilensuremath {30.30}&\pgfutilensuremath {0.20}&\pgfutilensuremath {152.1}&\pgfutilensuremath {30}&\pgfutilensuremath {80.47}&&\pgfutilensuremath {0.1}&\pgfutilensuremath {10}&\pgfutilensuremath {0.50}&&\pgfutilensuremath {1.2}&\pgfutilensuremath {10}&\pgfutilensuremath {66.06}&\\%
	\pgfutilensuremath {2}&\pgfutilensuremath {7}&\pgfutilensuremath {439.3}&\pgfutilensuremath {13}&\pgfutilensuremath {26.76}&\pgfutilensuremath {1.55}&\pgfutilensuremath {843.3}&\pgfutilensuremath {7}&\pgfutilensuremath {96.72}&\pgfutilensuremath {18.70}&\pgfutilensuremath {0.1}&\pgfutilensuremath {10}&\pgfutilensuremath {0.49}&&\pgfutilensuremath {20.7}&\pgfutilensuremath {10}&\pgfutilensuremath {71.32}&\\%
	\pgfutilensuremath {2}&\pgfutilensuremath {8}&&\pgfutilensuremath {0}&&\pgfutilensuremath {6.10}&\pgfutilensuremath {777.5}&\pgfutilensuremath {1}&\pgfutilensuremath {87.44}&\pgfutilensuremath {44.37}&\pgfutilensuremath {0.1}&\pgfutilensuremath {10}&\pgfutilensuremath {0.29}&&\pgfutilensuremath {514.5}&\pgfutilensuremath {8}&\pgfutilensuremath {75.35}&\pgfutilensuremath {8.30}\\%
	\pgfutilensuremath {2}&\pgfutilensuremath {9}&&&&&&\pgfutilensuremath {0}&&\pgfutilensuremath {62.04}&\pgfutilensuremath {0.1}&\pgfutilensuremath {10}&\pgfutilensuremath {0.24}&&\pgfutilensuremath {207.2}&\pgfutilensuremath {4}&\pgfutilensuremath {77.98}&\pgfutilensuremath {15.48}\\%
	\pgfutilensuremath {2}&\pgfutilensuremath {10}&&&&&&&&&\pgfutilensuremath {0.1}&\pgfutilensuremath {10}&\pgfutilensuremath {0.27}&&\pgfutilensuremath {261.6}&\pgfutilensuremath {1}&\pgfutilensuremath {81.93}&\pgfutilensuremath {22.28}\\%
	\midrule \pgfutilensuremath {4}&\pgfutilensuremath {2}&\pgfutilensuremath {0.3}&\pgfutilensuremath {30}&\pgfutilensuremath {36.68}&&\pgfutilensuremath {0.0}&\pgfutilensuremath {30}&\pgfutilensuremath {28.46}&&\pgfutilensuremath {0.0}&\pgfutilensuremath {10}&\pgfutilensuremath {0.18}&&\pgfutilensuremath {0.0}&\pgfutilensuremath {10}&\pgfutilensuremath {19.44}&\\%
	\pgfutilensuremath {4}&\pgfutilensuremath {3}&\pgfutilensuremath {271.1}&\pgfutilensuremath {23}&\pgfutilensuremath {35.20}&\pgfutilensuremath {1.11}&\pgfutilensuremath {0.5}&\pgfutilensuremath {30}&\pgfutilensuremath {52.62}&&\pgfutilensuremath {0.1}&\pgfutilensuremath {10}&\pgfutilensuremath {0.47}&&\pgfutilensuremath {0.1}&\pgfutilensuremath {10}&\pgfutilensuremath {39.75}&\\%
	\pgfutilensuremath {4}&\pgfutilensuremath {4}&&\pgfutilensuremath {0}&&\pgfutilensuremath {5.50}&\pgfutilensuremath {237.3}&\pgfutilensuremath {30}&\pgfutilensuremath {69.04}&&\pgfutilensuremath {0.2}&\pgfutilensuremath {10}&\pgfutilensuremath {1.05}&&\pgfutilensuremath {0.7}&\pgfutilensuremath {10}&\pgfutilensuremath {51.05}&\\%
	\pgfutilensuremath {4}&\pgfutilensuremath {5}&&&&&&\pgfutilensuremath {0}&&\pgfutilensuremath {31.12}&\pgfutilensuremath {0.4}&\pgfutilensuremath {10}&\pgfutilensuremath {0.54}&&\pgfutilensuremath {349.1}&\pgfutilensuremath {9}&\pgfutilensuremath {60.57}&\pgfutilensuremath {3.68}\\%
	\pgfutilensuremath {4}&\pgfutilensuremath {6}&&&&&&&&&\pgfutilensuremath {1.0}&\pgfutilensuremath {10}&\pgfutilensuremath {0.43}&&&\pgfutilensuremath {0}&&\pgfutilensuremath {12.80}\\%
	\pgfutilensuremath {4}&\pgfutilensuremath {7}&&&&&&&&&\pgfutilensuremath {3.1}&\pgfutilensuremath {10}&\pgfutilensuremath {0.39}&&&&&\\%
	\pgfutilensuremath {4}&\pgfutilensuremath {8}&&&&&&&&&\pgfutilensuremath {140.9}&\pgfutilensuremath {10}&\pgfutilensuremath {0.22}&&&&&\\%
	\pgfutilensuremath {4}&\pgfutilensuremath {9}&&&&&&&&&\pgfutilensuremath {224.7}&\pgfutilensuremath {10}&\pgfutilensuremath {0.59}&&&&&\\%
	\pgfutilensuremath {4}&\pgfutilensuremath {10}&&&&&&&&&\pgfutilensuremath {213.0}&\pgfutilensuremath {10}&\pgfutilensuremath {0.35}&&&&&\\%
	\midrule \pgfutilensuremath {6}&\pgfutilensuremath {2}&\pgfutilensuremath {277.4}&\pgfutilensuremath {27}&\pgfutilensuremath {36.41}&\pgfutilensuremath {0.47}&\pgfutilensuremath {0.1}&\pgfutilensuremath {30}&\pgfutilensuremath {32.39}&&\pgfutilensuremath {0.1}&\pgfutilensuremath {10}&\pgfutilensuremath {0.01}&&\pgfutilensuremath {0.1}&\pgfutilensuremath {10}&\pgfutilensuremath {21.09}&\\%
	\pgfutilensuremath {6}&\pgfutilensuremath {3}&&\pgfutilensuremath {0}&&\pgfutilensuremath {11.41}&\pgfutilensuremath {28.8}&\pgfutilensuremath {30}&\pgfutilensuremath {55.26}&&\pgfutilensuremath {0.2}&\pgfutilensuremath {10}&\pgfutilensuremath {0.29}&&\pgfutilensuremath {0.3}&\pgfutilensuremath {10}&\pgfutilensuremath {39.46}&\\%
	\pgfutilensuremath {6}&\pgfutilensuremath {4}&&&&&&\pgfutilensuremath {0}&&\pgfutilensuremath {24.19}&\pgfutilensuremath {0.8}&\pgfutilensuremath {10}&\pgfutilensuremath {0.39}&&\pgfutilensuremath {61.5}&\pgfutilensuremath {10}&\pgfutilensuremath {52.42}&\\%
	\pgfutilensuremath {6}&\pgfutilensuremath {5}&&&&&&&&&\pgfutilensuremath {3.4}&\pgfutilensuremath {10}&\pgfutilensuremath {0.38}&&\pgfutilensuremath {883.9}&\pgfutilensuremath {1}&\pgfutilensuremath {61.27}&\pgfutilensuremath {11.46}\\%
	\pgfutilensuremath {6}&\pgfutilensuremath {6}&&&&&&&&&\pgfutilensuremath {140.8}&\pgfutilensuremath {10}&\pgfutilensuremath {0.37}&&&\pgfutilensuremath {0}&&\pgfutilensuremath {30.32}\\%
	\pgfutilensuremath {6}&\pgfutilensuremath {7}&&&&&&&&&\pgfutilensuremath {107.4}&\pgfutilensuremath {5}&\pgfutilensuremath {0.27}&\pgfutilensuremath {0.30}&&&&\\%
	\pgfutilensuremath {6}&\pgfutilensuremath {8}&&&&&&&&&\pgfutilensuremath {157.6}&\pgfutilensuremath {5}&\pgfutilensuremath {0.19}&\pgfutilensuremath {0.21}&&&&\\%
	\pgfutilensuremath {6}&\pgfutilensuremath {9}&&&&&&&&&\pgfutilensuremath {220.1}&\pgfutilensuremath {2}&\pgfutilensuremath {0.16}&\pgfutilensuremath {0.24}&&&&\\%
	\pgfutilensuremath {6}&\pgfutilensuremath {10}&&&&&&&&&&\pgfutilensuremath {0}&&\pgfutilensuremath {0.27}&&&&\\%
	\midrule \pgfutilensuremath {8}&\pgfutilensuremath {2}&\pgfutilensuremath {921.4}&\pgfutilensuremath {7}&\pgfutilensuremath {72.61}&\pgfutilensuremath {4.02}&\pgfutilensuremath {0.1}&\pgfutilensuremath {30}&\pgfutilensuremath {29.45}&&\pgfutilensuremath {0.1}&\pgfutilensuremath {10}&\pgfutilensuremath {0.00}&&\pgfutilensuremath {0.1}&\pgfutilensuremath {10}&\pgfutilensuremath {16.60}&\\%
	\pgfutilensuremath {8}&\pgfutilensuremath {3}&&\pgfutilensuremath {0}&&\pgfutilensuremath {14.44}&\pgfutilensuremath {574.7}&\pgfutilensuremath {25}&\pgfutilensuremath {61.52}&\pgfutilensuremath {1.94}&\pgfutilensuremath {0.6}&\pgfutilensuremath {10}&\pgfutilensuremath {0.26}&&\pgfutilensuremath {0.6}&\pgfutilensuremath {10}&\pgfutilensuremath {36.87}&\\%
	\pgfutilensuremath {8}&\pgfutilensuremath {4}&&&&&&\pgfutilensuremath {0}&&\pgfutilensuremath {38.08}&\pgfutilensuremath {8.0}&\pgfutilensuremath {10}&\pgfutilensuremath {0.29}&&\pgfutilensuremath {637.9}&\pgfutilensuremath {3}&\pgfutilensuremath {52.43}&\pgfutilensuremath {5.93}\\%
	\pgfutilensuremath {8}&\pgfutilensuremath {5}&&&&&&&&&\pgfutilensuremath {66.6}&\pgfutilensuremath {10}&\pgfutilensuremath {0.31}&&&\pgfutilensuremath {0}&&\pgfutilensuremath {23.80}\\%
	\pgfutilensuremath {8}&\pgfutilensuremath {6}&&&&&&&&&\pgfutilensuremath {448.6}&\pgfutilensuremath {7}&\pgfutilensuremath {0.27}&\pgfutilensuremath {0.17}&&&&\\%
	\pgfutilensuremath {8}&\pgfutilensuremath {7}&&&&&&&&&\pgfutilensuremath {458.1}&\pgfutilensuremath {6}&\pgfutilensuremath {0.17}&\pgfutilensuremath {0.30}&&&&\\%
	\pgfutilensuremath {8}&\pgfutilensuremath {8}&&&&&&&&&\pgfutilensuremath {108.4}&\pgfutilensuremath {1}&\pgfutilensuremath {0.09}&\pgfutilensuremath {0.15}&&&&\\%
	\pgfutilensuremath {8}&\pgfutilensuremath {9}&&&&&&&&&&\pgfutilensuremath {0}&&\pgfutilensuremath {0.34}&&&&\\%
	\midrule \pgfutilensuremath {10}&\pgfutilensuremath {2}&&\pgfutilensuremath {0}&&\pgfutilensuremath {13.96}&\pgfutilensuremath {0.1}&\pgfutilensuremath {30}&\pgfutilensuremath {33.39}&&\pgfutilensuremath {0.1}&\pgfutilensuremath {10}&\pgfutilensuremath {0.02}&&\pgfutilensuremath {0.1}&\pgfutilensuremath {10}&\pgfutilensuremath {19.60}&\\%
	\pgfutilensuremath {10}&\pgfutilensuremath {3}&&&&&\pgfutilensuremath {1{,}210.3}&\pgfutilensuremath {1}&\pgfutilensuremath {60.97}&\pgfutilensuremath {13.10}&\pgfutilensuremath {1.9}&\pgfutilensuremath {10}&\pgfutilensuremath {0.26}&&\pgfutilensuremath {32.9}&\pgfutilensuremath {10}&\pgfutilensuremath {37.77}&\\%
	\pgfutilensuremath {10}&\pgfutilensuremath {4}&&&&&&\pgfutilensuremath {0}&&\pgfutilensuremath {43.30}&\pgfutilensuremath {22.6}&\pgfutilensuremath {10}&\pgfutilensuremath {0.25}&&&\pgfutilensuremath {0}&&\pgfutilensuremath {9.56}\\%
	\pgfutilensuremath {10}&\pgfutilensuremath {5}&&&&&&&&&\pgfutilensuremath {284.7}&\pgfutilensuremath {10}&\pgfutilensuremath {0.20}&&&&&\\%
	\pgfutilensuremath {10}&\pgfutilensuremath {6}&&&&&&&&&\pgfutilensuremath {926.9}&\pgfutilensuremath {5}&\pgfutilensuremath {0.25}&\pgfutilensuremath {0.17}&&&&\\%
	\pgfutilensuremath {10}&\pgfutilensuremath {7}&&&&&&&&&&\pgfutilensuremath {0}&&\pgfutilensuremath {0.40}&&&&\\\bottomrule %
	\end {tabular}%
\end{table}

The partial-order-based formulation for TCTP performs quite poorly when both~$T$ and~$m$ exceed~$2$. This is somewhat surprising, given that its LP gap is significantly smaller than the one of the assignment-based formulation. One possible explanation is that the partial-order-based formulation, with $O(n^3)$ constraints, is relatively large. This makes the formulation stronger, but it also increases the time needed to solve the LP relaxation in every node of the branch-and-bound tree. For both formulations, the performance seems to be more sensitive to the value of the deadline~$T$ than to the number $m$ of testers. In fact, while all instances with $m=4$ and $T=4$ can be solved to optimality, there is only one solved instance with $m=2$ and $T=8$.

For TCSP, in contrast, the partial-order-based formulation clearly outperforms the assignment-based formulation. This superior performance is most likely explained by the remarkably small LP gap, which is several orders of magnitude below the gap of the assignment-based formulation.   As such, our results are consistent with the earlier finding that partial-order-based formulations typically perform well for scheduling problems with a total weighted completion time objective function (see for instance \citeauthor{Potts1980},~\citeyear{Potts1980} or \citeauthor{keha2009mixed},~\citeyear{keha2009mixed}).

\begin{table}\footnotesize
	\setlength\tabcolsep{3pt}
	\centering
	\renewcommand{\arraystretch}{0.6}
	\caption{Computational performance for the partial-order based formulation for larger TCSP instances.} \label{tab:comp_po_detail}
	\begin{subtable}{\textwidth}
		\caption{Partial-order based formulation for TCSP with larger values for the deadline~$T$.}\label{tab:largeT}
		\centering
		\begin {tabular}{rrrcrrrcr}%
		\toprule & \multicolumn {4}{c}{$m = 2$} & \multicolumn {4}{c}{$m = 4$} \\ \cmidrule (lr){2-5} \cmidrule (lr){6-9}$T$&\multicolumn {1}{c}{CPU}&\multicolumn {1}{r}{\#}&\multicolumn {1}{c}{\%LP}&\multicolumn {1}{c}{\%fin}&\multicolumn {1}{c}{CPU}&\multicolumn {1}{r}{\#}&\multicolumn {1}{c}{\%LP}&\multicolumn {1}{c}{\%fin}\\\midrule %
		\pgfutilensuremath {15}&\pgfutilensuremath {0.5}&\pgfutilensuremath {10}&\pgfutilensuremath {0.20}&&\pgfutilensuremath {1{,}532.9}&\pgfutilensuremath {1}&\pgfutilensuremath {0.12}&\pgfutilensuremath {0.29}\\%
		\pgfutilensuremath {20}&\pgfutilensuremath {25.6}&\pgfutilensuremath {10}&\pgfutilensuremath {0.14}&&&\pgfutilensuremath {0}&&\pgfutilensuremath {0.26}\\%
		\pgfutilensuremath {25}&\pgfutilensuremath {48.8}&\pgfutilensuremath {10}&\pgfutilensuremath {0.09}&&&&&\\%
		\pgfutilensuremath {30}&\pgfutilensuremath {173.1}&\pgfutilensuremath {10}&\pgfutilensuremath {0.06}&&&&&\\%
		\pgfutilensuremath {35}&\pgfutilensuremath {695.5}&\pgfutilensuremath {5}&\pgfutilensuremath {0.05}&\pgfutilensuremath {0.06}&&&&\\%
		\pgfutilensuremath {40}&\pgfutilensuremath {99.7}&\pgfutilensuremath {1}&\pgfutilensuremath {0.03}&\pgfutilensuremath {0.05}&&&&\\%
		\pgfutilensuremath {45}&\pgfutilensuremath {890.2}&\pgfutilensuremath {2}&\pgfutilensuremath {0.02}&\pgfutilensuremath {0.08}&&&&\\%
		\pgfutilensuremath {50}&&\pgfutilensuremath {0}&&\pgfutilensuremath {0.09}&&&&\\\bottomrule %
		\end {tabular}%
	\end{subtable}
	\vspace{2ex}
	
	\begin{subtable}{\textwidth}
		\caption{Partial-order based formulation for TCSP with larger values for the number of searchers~$m$.}\label{tab:largem}
		\centering
		\begin {tabular}{rrrcrrrcrrrcrrrcr}%
		\toprule & \multicolumn {4}{c}{$T = 2$} & \multicolumn {4}{c}{$T = 3$} & \multicolumn {4}{c}{$T = 4$} & \multicolumn {4}{c}{$T = 5$}\\ \cmidrule (lr){2-5} \cmidrule (lr){6-9} \cmidrule (lr){10-13} \cmidrule (lr){14-17}$m$&\multicolumn {1}{c}{CPU}&\multicolumn {1}{r}{\#}&\multicolumn {1}{c}{\%LP}&\multicolumn {1}{c}{\%fin}&\multicolumn {1}{c}{CPU}&\multicolumn {1}{r}{\#}&\multicolumn {1}{c}{\%LP}&\multicolumn {1}{c}{\%fin}&\multicolumn {1}{c}{CPU}&\multicolumn {1}{r}{\#}&\multicolumn {1}{c}{\%LP}&\multicolumn {1}{c}{\%fin}&\multicolumn {1}{c}{CPU}&\multicolumn {1}{r}{\#}&\multicolumn {1}{c}{\%LP}&\multicolumn {1}{c}{\%fin}\\\midrule %
		\pgfutilensuremath {12}&\pgfutilensuremath {0.5}&\pgfutilensuremath {10}&\pgfutilensuremath {0.01}&&\pgfutilensuremath {4.8}&\pgfutilensuremath {10}&\pgfutilensuremath {0.19}&&\pgfutilensuremath {246.1}&\pgfutilensuremath {10}&\pgfutilensuremath {0.31}&&\pgfutilensuremath {113.3}&\pgfutilensuremath {4}&\pgfutilensuremath {0.09}&\pgfutilensuremath {0.26}\\%
		\pgfutilensuremath {14}&\pgfutilensuremath {0.6}&\pgfutilensuremath {10}&\pgfutilensuremath {0.02}&&\pgfutilensuremath {6.1}&\pgfutilensuremath {10}&\pgfutilensuremath {0.06}&&\pgfutilensuremath {370.2}&\pgfutilensuremath {10}&\pgfutilensuremath {0.19}&&\pgfutilensuremath {612.7}&\pgfutilensuremath {4}&\pgfutilensuremath {0.04}&\pgfutilensuremath {0.32}\\%
		\pgfutilensuremath {16}&\pgfutilensuremath {0.4}&\pgfutilensuremath {10}&\pgfutilensuremath {0.00}&&\pgfutilensuremath {12.1}&\pgfutilensuremath {10}&\pgfutilensuremath {0.05}&&\pgfutilensuremath {472.6}&\pgfutilensuremath {7}&\pgfutilensuremath {0.09}&\pgfutilensuremath {0.28}&\pgfutilensuremath {1{,}280.4}&\pgfutilensuremath {2}&\pgfutilensuremath {0.04}&\pgfutilensuremath {0.53}\\%
		\pgfutilensuremath {18}&\pgfutilensuremath {2.0}&\pgfutilensuremath {10}&\pgfutilensuremath {0.01}&&\pgfutilensuremath {26.0}&\pgfutilensuremath {10}&\pgfutilensuremath {0.01}&&\pgfutilensuremath {713.2}&\pgfutilensuremath {5}&\pgfutilensuremath {0.07}&\pgfutilensuremath {0.12}&\pgfutilensuremath {205.2}&\pgfutilensuremath {2}&\pgfutilensuremath {0.00}&\pgfutilensuremath {0.72}\\%
		\pgfutilensuremath {20}&\pgfutilensuremath {1.5}&\pgfutilensuremath {10}&\pgfutilensuremath {0.00}&&\pgfutilensuremath {97.1}&\pgfutilensuremath {10}&\pgfutilensuremath {0.05}&&\pgfutilensuremath {757.0}&\pgfutilensuremath {7}&\pgfutilensuremath {0.02}&\pgfutilensuremath {0.16}&&\pgfutilensuremath {0}&&\pgfutilensuremath {0.78}\\%
		\pgfutilensuremath {22}&\pgfutilensuremath {1.8}&\pgfutilensuremath {10}&\pgfutilensuremath {0.00}&&\pgfutilensuremath {246.4}&\pgfutilensuremath {10}&\pgfutilensuremath {0.07}&&\pgfutilensuremath {1{,}015.8}&\pgfutilensuremath {2}&\pgfutilensuremath {0.09}&\pgfutilensuremath {0.70}&&&&\\%
		\pgfutilensuremath {24}&\pgfutilensuremath {3.7}&\pgfutilensuremath {10}&\pgfutilensuremath {0.03}&&\pgfutilensuremath {457.3}&\pgfutilensuremath {10}&\pgfutilensuremath {0.08}&&\pgfutilensuremath {567.2}&\pgfutilensuremath {2}&\pgfutilensuremath {0.02}&\pgfutilensuremath {0.45}&&&&\\%
		\pgfutilensuremath {26}&\pgfutilensuremath {9.3}&\pgfutilensuremath {10}&\pgfutilensuremath {0.00}&&\pgfutilensuremath {395.8}&\pgfutilensuremath {5}&\pgfutilensuremath {0.04}&\pgfutilensuremath {0.31}&&\pgfutilensuremath {0}&&\pgfutilensuremath {0.86}&&&&\\%
		\pgfutilensuremath {28}&\pgfutilensuremath {18.9}&\pgfutilensuremath {10}&\pgfutilensuremath {0.00}&&\pgfutilensuremath {497.8}&\pgfutilensuremath {5}&\pgfutilensuremath {0.02}&\pgfutilensuremath {0.53}&&\pgfutilensuremath {0}&&\pgfutilensuremath {0.85}&&&&\\%
		\pgfutilensuremath {30}&\pgfutilensuremath {8.4}&\pgfutilensuremath {10}&\pgfutilensuremath {0.00}&&\pgfutilensuremath {1{,}184.6}&\pgfutilensuremath {7}&\pgfutilensuremath {0.01}&\pgfutilensuremath {0.89}&&\pgfutilensuremath {0}&&\pgfutilensuremath {0.89}&&&&\\\bottomrule %
		\end {tabular}%
	\end{subtable}
\end{table}

Table~\ref{tab:comp_po_detail} reports on the computational performance for the partial-order-based formulation for larger TCSP instances. It shows that the time limit appears to become binding only when $n = mT$ reaches~$60$. When comparing Tables~\ref{tab:largeT} and~\ref{tab:largem}, we observe that, similarly to TCTP, instances with low~$T$ and high~$m$ seem to be easier to solve than instances with high~$T$ and low~$m$ in general. For the special case where~$m = 2$, however, the partial-order-based formulation can still solve instances with a fairly high value for~$T$. Observe in this respect that, although we have established in Section~\ref{sec:compl} that \mbox{$m$-TCSP} is strongly NP-hard if~$m \geq 3$, the complexity status of TCSP in the special case where~$m = 2$ is still open.

\begin{table}\footnotesize
	\setlength\tabcolsep{2.9pt}
	\centering
	\renewcommand{\arraystretch}{0.6}
	\caption{Computational performance of the TCTP formulations for different values of~$q = \prod_{j\in N} p_j$.} \label{tab:comp_q_po}
	\begin {tabular}{rrrrcrrcrrcrrcrrcrrc}%
	\toprule & & \multicolumn {9}{c}{Partial-order-based formulation} & \multicolumn {9}{c}{Assignment-based formulation} \\ \cmidrule (lr){3-11} \cmidrule (lr){12-20} & & \multicolumn {3}{c}{$q \in [0.01, 0.30]$} & \multicolumn {3}{c}{$q \in [0.31, 0.60]$} & \multicolumn {3}{c}{$q \in [0.61, 0.90]$} & \multicolumn {3}{c}{$q \in [0.01, 0.30]$} & \multicolumn {3}{c}{$q \in [0.31, 0.60]$} & \multicolumn {3}{c}{$q \in [0.61, 0.90]$} \\ \cmidrule (lr){3-5} \cmidrule (lr){6-8} \cmidrule (lr){9-11} \cmidrule (lr){12-14} \cmidrule (lr){15-17} \cmidrule (lr){18-20}$m$&$T$&\multicolumn {1}{c}{CPU}&\multicolumn {1}{r}{\#}&\multicolumn {1}{c}{\%LP}&\multicolumn {1}{c}{CPU}&\multicolumn {1}{r}{\#}&\multicolumn {1}{c}{\%LP}&\multicolumn {1}{c}{CPU}&\multicolumn {1}{r}{\#}&\multicolumn {1}{c}{\%LP}&\multicolumn {1}{c}{CPU}&\multicolumn {1}{r}{\#}&\multicolumn {1}{c}{\%LP}&\multicolumn {1}{c}{CPU}&\multicolumn {1}{r}{\#}&\multicolumn {1}{c}{\%LP}&\multicolumn {1}{c}{CPU}&\multicolumn {1}{r}{\#}&\multicolumn {1}{c}{\%LP}\\\midrule %
	\pgfutilensuremath {2}&\pgfutilensuremath {2}&\pgfutilensuremath {0.1}&\pgfutilensuremath {10}&\pgfutilensuremath {49.31}&\pgfutilensuremath {0.1}&\pgfutilensuremath {10}&\pgfutilensuremath {23.96}&\pgfutilensuremath {0.1}&\pgfutilensuremath {10}&\pgfutilensuremath {7.95}&\pgfutilensuremath {0.1}&\pgfutilensuremath {10}&\pgfutilensuremath {27.59}&\pgfutilensuremath {0.1}&\pgfutilensuremath {10}&\pgfutilensuremath {32.04}&\pgfutilensuremath {0.1}&\pgfutilensuremath {10}&\pgfutilensuremath {13.73}\\%
	\pgfutilensuremath {2}&\pgfutilensuremath {3}&\pgfutilensuremath {0.1}&\pgfutilensuremath {10}&\pgfutilensuremath {48.84}&\pgfutilensuremath {0.1}&\pgfutilensuremath {10}&\pgfutilensuremath {23.63}&\pgfutilensuremath {0.1}&\pgfutilensuremath {10}&\pgfutilensuremath {7.58}&\pgfutilensuremath {0.1}&\pgfutilensuremath {10}&\pgfutilensuremath {69.31}&\pgfutilensuremath {0.1}&\pgfutilensuremath {10}&\pgfutilensuremath {65.37}&\pgfutilensuremath {0.1}&\pgfutilensuremath {10}&\pgfutilensuremath {31.77}\\%
	\pgfutilensuremath {2}&\pgfutilensuremath {4}&\pgfutilensuremath {0.3}&\pgfutilensuremath {10}&\pgfutilensuremath {48.58}&\pgfutilensuremath {0.2}&\pgfutilensuremath {10}&\pgfutilensuremath {22.19}&\pgfutilensuremath {0.1}&\pgfutilensuremath {10}&\pgfutilensuremath {6.54}&\pgfutilensuremath {0.1}&\pgfutilensuremath {10}&\pgfutilensuremath {76.59}&\pgfutilensuremath {0.2}&\pgfutilensuremath {10}&\pgfutilensuremath {79.36}&\pgfutilensuremath {0.2}&\pgfutilensuremath {10}&\pgfutilensuremath {45.08}\\%
	\pgfutilensuremath {2}&\pgfutilensuremath {5}&\pgfutilensuremath {12.3}&\pgfutilensuremath {10}&\pgfutilensuremath {46.04}&\pgfutilensuremath {10.5}&\pgfutilensuremath {10}&\pgfutilensuremath {19.03}&\pgfutilensuremath {21.3}&\pgfutilensuremath {10}&\pgfutilensuremath {6.98}&\pgfutilensuremath {1.9}&\pgfutilensuremath {10}&\pgfutilensuremath {86.93}&\pgfutilensuremath {5.3}&\pgfutilensuremath {10}&\pgfutilensuremath {86.55}&\pgfutilensuremath {6.3}&\pgfutilensuremath {10}&\pgfutilensuremath {55.53}\\%
	\pgfutilensuremath {2}&\pgfutilensuremath {6}&\pgfutilensuremath {103.7}&\pgfutilensuremath {10}&\pgfutilensuremath {61.86}&\pgfutilensuremath {78.6}&\pgfutilensuremath {10}&\pgfutilensuremath {20.38}&\pgfutilensuremath {164.5}&\pgfutilensuremath {9}&\pgfutilensuremath {6.24}&\pgfutilensuremath {18.1}&\pgfutilensuremath {10}&\pgfutilensuremath {87.51}&\pgfutilensuremath {155.9}&\pgfutilensuremath {10}&\pgfutilensuremath {92.25}&\pgfutilensuremath {282.4}&\pgfutilensuremath {10}&\pgfutilensuremath {61.64}\\%
	\pgfutilensuremath {2}&\pgfutilensuremath {7}&\pgfutilensuremath {754.6}&\pgfutilensuremath {5}&\pgfutilensuremath {50.86}&\pgfutilensuremath {312.9}&\pgfutilensuremath {5}&\pgfutilensuremath {15.05}&\pgfutilensuremath {124.6}&\pgfutilensuremath {3}&\pgfutilensuremath {6.09}&\pgfutilensuremath {843.3}&\pgfutilensuremath {7}&\pgfutilensuremath {96.72}&&&&&&\\%
	\pgfutilensuremath {2}&\pgfutilensuremath {8}&&&&&&&&&&\pgfutilensuremath {777.5}&\pgfutilensuremath {1}&\pgfutilensuremath {87.44}&&&&&&\\%
	\midrule \pgfutilensuremath {4}&\pgfutilensuremath {2}&\pgfutilensuremath {0.3}&\pgfutilensuremath {10}&\pgfutilensuremath {68.49}&\pgfutilensuremath {0.2}&\pgfutilensuremath {10}&\pgfutilensuremath {29.67}&\pgfutilensuremath {0.3}&\pgfutilensuremath {10}&\pgfutilensuremath {11.88}&\pgfutilensuremath {0.0}&\pgfutilensuremath {10}&\pgfutilensuremath {35.98}&\pgfutilensuremath {0.0}&\pgfutilensuremath {10}&\pgfutilensuremath {34.35}&\pgfutilensuremath {0.0}&\pgfutilensuremath {10}&\pgfutilensuremath {15.05}\\%
	\pgfutilensuremath {4}&\pgfutilensuremath {3}&\pgfutilensuremath {139.9}&\pgfutilensuremath {10}&\pgfutilensuremath {60.31}&\pgfutilensuremath {308.4}&\pgfutilensuremath {6}&\pgfutilensuremath {23.19}&\pgfutilensuremath {426.5}&\pgfutilensuremath {7}&\pgfutilensuremath {9.65}&\pgfutilensuremath {0.4}&\pgfutilensuremath {10}&\pgfutilensuremath {67.25}&\pgfutilensuremath {0.5}&\pgfutilensuremath {10}&\pgfutilensuremath {61.49}&\pgfutilensuremath {0.7}&\pgfutilensuremath {10}&\pgfutilensuremath {29.13}\\%
	\pgfutilensuremath {4}&\pgfutilensuremath {4}&&&&&&&&&&\pgfutilensuremath {94.4}&\pgfutilensuremath {10}&\pgfutilensuremath {78.99}&\pgfutilensuremath {217.8}&\pgfutilensuremath {10}&\pgfutilensuremath {79.94}&\pgfutilensuremath {399.9}&\pgfutilensuremath {10}&\pgfutilensuremath {48.21}\\%
	\midrule \pgfutilensuremath {6}&\pgfutilensuremath {2}&\pgfutilensuremath {49.9}&\pgfutilensuremath {10}&\pgfutilensuremath {56.96}&\pgfutilensuremath {295.7}&\pgfutilensuremath {9}&\pgfutilensuremath {33.40}&\pgfutilensuremath {541.0}&\pgfutilensuremath {8}&\pgfutilensuremath {14.09}&\pgfutilensuremath {0.1}&\pgfutilensuremath {10}&\pgfutilensuremath {45.78}&\pgfutilensuremath {0.1}&\pgfutilensuremath {10}&\pgfutilensuremath {35.94}&\pgfutilensuremath {0.1}&\pgfutilensuremath {10}&\pgfutilensuremath {15.45}\\%
	\pgfutilensuremath {6}&\pgfutilensuremath {3}&&&&&&&&&&\pgfutilensuremath {13.9}&\pgfutilensuremath {10}&\pgfutilensuremath {67.44}&\pgfutilensuremath {27.5}&\pgfutilensuremath {10}&\pgfutilensuremath {65.65}&\pgfutilensuremath {44.9}&\pgfutilensuremath {10}&\pgfutilensuremath {32.68}\\%
	\midrule \pgfutilensuremath {8}&\pgfutilensuremath {2}&\pgfutilensuremath {789.4}&\pgfutilensuremath {6}&\pgfutilensuremath {78.49}&\pgfutilensuremath {1{,}713.0}&\pgfutilensuremath {1}&\pgfutilensuremath {37.33}&&&&\pgfutilensuremath {0.1}&\pgfutilensuremath {10}&\pgfutilensuremath {40.08}&\pgfutilensuremath {0.1}&\pgfutilensuremath {10}&\pgfutilensuremath {32.33}&\pgfutilensuremath {0.1}&\pgfutilensuremath {10}&\pgfutilensuremath {15.94}\\%
	\pgfutilensuremath {8}&\pgfutilensuremath {3}&&&&&&&&&&\pgfutilensuremath {417.1}&\pgfutilensuremath {10}&\pgfutilensuremath {71.63}&\pgfutilensuremath {602.7}&\pgfutilensuremath {9}&\pgfutilensuremath {65.86}&\pgfutilensuremath {795.5}&\pgfutilensuremath {6}&\pgfutilensuremath {38.15}\\%
	\midrule \pgfutilensuremath {10}&\pgfutilensuremath {2}&&&&&&&&&&\pgfutilensuremath {0.1}&\pgfutilensuremath {10}&\pgfutilensuremath {47.05}&\pgfutilensuremath {0.1}&\pgfutilensuremath {10}&\pgfutilensuremath {39.88}&\pgfutilensuremath {0.1}&\pgfutilensuremath {10}&\pgfutilensuremath {13.22}\\%
	\pgfutilensuremath {10}&\pgfutilensuremath {3}&&&&&&&&&&\pgfutilensuremath {1{,}210.3}&\pgfutilensuremath {1}&\pgfutilensuremath {60.97}&&&&&&\\\bottomrule %
	\end {tabular}% 
\end{table}

Table \ref{tab:comp_q_po} further details the computational performance for TCTP by disaggregating the results on the basis of the success probability~$q$. Here, instances with lower values of~$q$ seem to be relatively more tractable. To explain this behavior, recall that a smaller~$q$ results in a wider range for the values that the objective function can attain across the various solutions (Lemma~\ref{lem:trivial_approx}). This wider range might then be exploited by the solver to differentiate more easily among different solutions. The effect of~$q$ on the LP gap is also consistent with Lemma~\ref{lem:trivial_approx}.

	\begin{table}\footnotesize
		\centering
		\renewcommand{\arraystretch}{0.6}
		\caption{CPU time in seconds for the assignment-based formulation and the dynamic program (DP) of Section~\ref{sec:DP}, for TCTP and TCSP instances with $T = 2$ and $m = 10$, depending on the range from which the costs~$c_j$ are drawn.} \label{tab:comp_res_DP}
		\begin {tabular}{lrrrr}%
		\toprule & \multicolumn {2}{c}{TCTP} & \multicolumn {2}{c}{TCSP}\\ \cmidrule (lr){2-3} \cmidrule (lr){4-5}Interval for $c_j$&Assignment&DP&Assignment&DP\\\midrule %
		$[0, 10]$&\pgfutilensuremath {0.0634}&\pgfutilensuremath {0.0004}&\pgfutilensuremath {0.0381}&\pgfutilensuremath {0.0003}\\%
		$[0, 100]$&\pgfutilensuremath {0.0652}&\pgfutilensuremath {0.0027}&\pgfutilensuremath {0.0408}&\pgfutilensuremath {0.0023}\\%
		$[0, 1\,000]$&\pgfutilensuremath {0.0621}&\pgfutilensuremath {0.0291}&\pgfutilensuremath {0.0439}&\pgfutilensuremath {0.0255}\\%
		$[0, 10\,000]$&\pgfutilensuremath {0.0757}&\pgfutilensuremath {0.2454}&\pgfutilensuremath {0.0394}&\pgfutilensuremath {0.2572}\\%
		$[0, 100\,000]$&\pgfutilensuremath {0.0763}&\pgfutilensuremath {2.8271}&\pgfutilensuremath {0.0525}&\pgfutilensuremath {3.0296}\\\bottomrule %
		\end {tabular}%
	\end{table}
	
	For $T = 2$, the dynamic program described in Section~\ref{sec:DP} provides an alternative approach to obtain an optimal schedule. To test the computational performance of this dynamic program, we have performed additional experiments using the same set of instances as described above, except that we have re-generated the costs from intervals of increasing sizes (instead of the $[0,10]$ interval as used previously) to examine how this affects performance.   Table~\ref{tab:comp_res_DP} reports on the results of these additional experiments. We focus on the assignment-based formulation and on instances with $m = 10$ because, for this setting, we can solve all instances for both TCTP and TCSP. The table clearly shows that the required CPU time for the dynamic program is roughly proportional to the interval size, which is not surprising given the theoretical pseudo-polynomial running time of $O(n^2c(N))$. The assignment-based formulation, in contrast, seems to be mostly unaffected by the interval size, and it outperforms the dynamic program for intervals of sizes $10\,000$ and $100\,000$.

\subsection{Performance of the local search procedure}\label{sec:perf_ls}
\begin{table}[t]
	\footnotesize
	\centering
	\renewcommand{\arraystretch}{0.6}
	\caption{Computational performance of our local search for TCTP and TCSP.\@ The average and maximal CPU time (both in seconds) are computed over all instances for which an optimal solution is known. The column \%opt gives the percentage of instances for which the local search found the optimum, and the average and maximal percentage optimality gap is computed over instances for which the local search did not find the optimal solution.} \label{tab:comp_ls}
	\begin {tabular}{crrrrr}%
	\toprule \multicolumn {1}{c}{Problem}&\multicolumn {1}{c}{avg CPU}&\multicolumn {1}{c}{max CPU}&\multicolumn {1}{c}{\%opt}&\multicolumn {1}{c}{avg \%gap}&\multicolumn {1}{c}{max \%gap}\\\midrule %
	TCTP&\pgfutilensuremath {0.026}&\pgfutilensuremath {0.064}&\pgfutilensuremath {100.00}&\pgfutilensuremath {0.000}&\pgfutilensuremath {0.000}\\%
	TCSP&\pgfutilensuremath {0.039}&\pgfutilensuremath {0.070}&\pgfutilensuremath {96.36}&\pgfutilensuremath {0.050}&\pgfutilensuremath {0.131}\\\bottomrule %
	\end {tabular}%
\end{table}

\colorlet{graphred}{red!70!black}
\colorlet{graphorange}{orange!70!black}
\colorlet{graphyellow}{yellow!40!black}
\colorlet{graphblue}{cyan!70!black}

\colorlet{Gcol}{graphred}
\colorlet{GLScol}{graphorange}

\tikzset{
	T10/.style={
		graphyellow,
	},
	T15/.style={
		graphred,
	},
	T20/.style={
		graphblue,
	},
}

\pgfplotsset{
	boundsgraph/.style={%
		width=6cm,
		height= 6cm,
		%general axis settings
		clip=false,
		axis line style={black},
		tick label style={black},
		%
		%y-axis settings
		axis y line= {left},
		y axis line style = {xshift = -1ex},
		y tick label style={xshift = -1ex,
		 /pgf/number format/.cd,
		fixed,
		fixed zerofill,
		precision=2,
		/tikz/.cd
		},
		y tick style={xshift = -1ex,} ,  
		ylabel style={
			rotate=-90, 
			align=left,
			at={(ticklabel* cs:1)},
			anchor= south west,
			xshift = -5ex,
			black,
		},
		%
		%x-axis settings
		axis x line=left,
		x axis line style= {yshift = -1ex},
		x tick label style={yshift = -1ex},
		x tick style={yshift = -1ex} ,  
		xlabel style={
			at={(ticklabel* cs:1)},
			anchor= north west,
			yshift = -1ex,
			align = left,
			black,
		},
	},
}
\pgfplotstableread[col sep=comma]{
	m,T5,T10,T15
	2,0.4599,0.2764,0.1591
	6,0.3838,0.4497,0.4391
	10,0.2014,0.3041,0.3554
	14,0.2351,0.5700,1.0100
	18,0.2043,1.1553,1.2046
}\lsTCSPTable
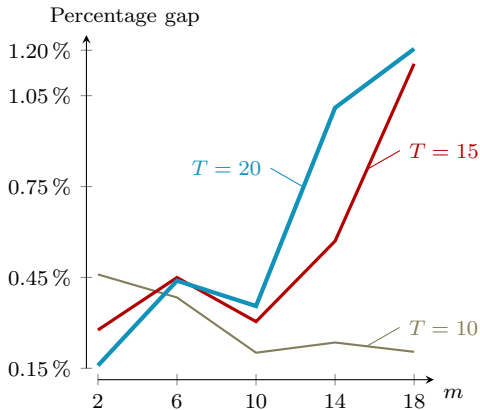
\begin{figure}
	\centering
	\caption{Average percentage gap between the upper bound provided by the local search procedure and the lower bound provided by the LP relaxation of our formulations, for large TCSP instances.} \label{fig:qual_bounds}
	\begin{tikzpicture}[baseline, font = \footnotesize]
		\begin{axis}[
		boundsgraph,
		%x-axis settings
		xmin = 2,
		xmax = 19,
		xtick = {2, 6, 10, 14, 18},
		xlabel={$m$},
		%y-axis settings
		ymin = 0.15,
		ymax = 1.25,
		ytick = {0.15, 0.45, 0.75, 1.05, 1.20},
		ylabel={Percentage gap},
		yticklabel={\pgfmathprintnumber{\tick}\,\%},
		]
		\addplot[thick, T10] table[x index = 0, y index=1]\lsTCSPTable coordinate [pos=0.85] (T10);
		\addplot[very thick, T15] table[x index = 0, y index=2]\lsTCSPTable coordinate [pos=0.85] (T15);
		\addplot[ultra thick, T20] table[x index = 0, y index=3]\lsTCSPTable coordinate [pos=0.65] (T20);				
		\begin{scope}
		\path (T10) ++(30:0.5cm) coordinate (T10a);
		\draw[T10] (T10) -- (T10a);
		\node[anchor = west, T10] at (T10a) {$T=10$};
		\path (T15) ++(30:0.5cm) coordinate (T15a);
		\draw[T15] (T15) -- (T15a);
		\node[anchor = west, T15] at (T15a) {$T=15$};
		\path (T20) ++(150:0.5cm) coordinate (T20a);
		\draw[T20] (T20) -- (T20a);
		\node[anchor = east, T20] at (T20a) {$T=20$};
		\end{scope}
		\end{axis}
	\end{tikzpicture}
\end{figure}

Table~\ref{tab:comp_ls} indicates that the local search described in Section~\ref{sec:heur}, with the three initializations as discussed in Section~\ref{sec:implementation}, finds near-optimal solutions in very limited computation times. In fact, the local search is able to find the optimal solution for all TCTP instances solved to optimality by the MIP formulations. For TCSP, it finds the optimum for~$96.47\%$ of the solved instances and, for those instances where it does not find the optimum, the maximal optimality gap is only~$0.131\%$.

To evaluate the performance of our local search for large TCSP instances, Figure~\ref{fig:qual_bounds} shows the average percentage gap between the upper bound provided by the local search procedure and the lower bound provided by the LP relaxation of our formulations. Since the LP relaxation for the partial-order-based formulation could not always be solved to optimality within the 30-minute time limit, we added Constraints~\eqref{mip:po_transd} iteratively in a cutting-plane fashion; the considered lower bound is then the best  bound obtained within the time limit.  The gap displayed in Figure~\ref{fig:qual_bounds} thus overestimates the true locality gap as it is based on a lower bound for the optimal search cost. As even this overestimate never exceeds 1.2\%, one can infer that our local search provides  high-quality solutions for large TCSP instances as well. Here, the relatively larger gaps for~$T = 15$ and~$T = 20$ when $m \geq 14$ can be partially explained by the fact that we could not solve the LP relaxation to optimality for such large instances. We do not report a similar analysis for TCTP, as our TCTP formulations are significantly weaker than the TCSP formulations, and the lower bound provided by their LP relaxations is not sufficiently tight to assess the performance of the local search.

\section{Conclusion}\label{sec:conc}
We have introduced the time-critical testing problem and the time-critical search problem. These natural generalizations of the classical sequential testing problem and search problem are applicable in a wide range of time-critical operations, such as machine maintenance, medical diagnosis, and new product development. We have shown that both problems are NP-hard, we have developed a pseudo-polynomial dynamic program as well as a FPTAS for the special case of two time slots, and we have described two mixed integer programming formulations as well as a local search heuristic for the general case. With minor modifications, these formulations can also solve closely related testing problems for which no exact and compact mixed integer programming formulation was previously available.  Based on extensive computational experiments, we find that our assignment-based formulation performs better than our partial-order-based formulation for the testing variant, while this is the other way round for the search variant. Despite its lack of a theoretical worst-case guarantee, we also find that, empirically, our local search procedure performs very well in finding near-optimal solutions in limited computation times.

Although we have established that both problems are strongly NP-hard for every fixed number of machines~$m \geq 3$, their complexity status is still open for~$m = 2$. An interesting open question is thus whether there exists a polynomial-time algorithm for the special case of two machines, and whether the search and testing variant still have the same complexity status for this special case. A second open question is whether TCTP is weakly or strongly NP-hard in the special case of a fixed deadline~$T \geq 3$. Corollary~\ref{col:TCTP_NPhard} combined with the pseudo-polynomial dynamic program of Section~\ref{sec:DP} imply that the problem is weakly NP-hard for $T = 2$, but it does not seem straightforward to generalize the dynamic program to arbitrary but fixed~$T$.

Another promising direction for further research is to develop approximation algorithms for both problem variants. For the testing variant, we have illustrated that a greedy approach that iteratively tests a subset with minimal cost-to-failure-probability as well as a pairwise-interchange-based local search heuristic fail to provide a constant-factor approximation guarantee. The approaches of \cite{daldal2016approximation} and \cite{segev2018polynomial} for the batch testing problem also do not seem to be directly modifiable towards our setting. For the search variant, in turn, an analogous cost-to-probability-based greedy approach also does not provide a constant-factor approximation guarantee, but it is still an open question whether our local search heuristic provides a constant-factor approximation for this variant.

%\bibliographystyle{agsm}
%\bibliography{bib_file}

\clearpage
\begin{appendices}

\section{A FPTAS for TCTP with two time slots}\label{app:fptas}
\begin{proof}[\unskip\nopunct] In this appendix, we prove Theorem~\ref{th:fptas} by showing how the pseudo-polynomial dynamic program of Section~\ref{sec:DP} can be converted into a FPTAS by appropriately rounding the testing costs. Without loss of generality, assume that the tests are indexed in non-increasing value of their costs, with the largest probability as tie-breaker. That is, $c_1 \geq \ldots \geq c_n$, and~$p_i \geq p_{i + 1}$ for every~$i\in N$ with~$c_i = c_{i + 1}$. If~$c(S^\star) = 0$, then an optimal schedule is given by~$\sigma = (S, N \setminus S)$ with~$S = \{j, \ldots, n\}$ such that~$j$ is the smallest integer for which $c_{j} = 0$ and~$n - j + 1 \leq m$. Since we can therefore check in polynomial time whether~$c(S^\star) = 0$, we henceforth assume that $c(S^\star) > 0$.

Consider an arbitrary~$\varepsilon > 0$ and call~$n_+$ the largest index for which~$c_{n_+} > 0$. For each~$i = 1,\ldots,n_+$, we consider a modified instance in which the costs are rounded down to the nearest multiple of~$\mu^{(i)} \coloneqq \varepsilon c_i/ n$. That is, we replace the cost of each test~$j \in N$ by a new cost~$c^{(i)}_j \coloneqq \lfloor c_j / \mu^{(i)} \rfloor$ and perform our dynamic program for stages $j \in \{i, \ldots, n\}$ with the modified costs to compute~$v_i(b,m)$ for each~$b = 0,\ldots, \sum_{j = i}^n c^{(i)}_j$. If~$v_i(b,m) = +\infty$ for all values of~$b$, then we define~$z^{(i)}_\varepsilon \coloneqq +\infty$.  Otherwise, we consider an optimal testing set~$S^{(i)}_\varepsilon \subseteq \{i, \ldots, n\}$ and the schedule $\sigma^{(i)}_\varepsilon = (S^{(i)}_\varepsilon, N \setminus S^{(i)}_\varepsilon)$ for the modified instance, and we denote the expected testing cost in the original instance as
\begin{equation*}
z^{(i)}_\varepsilon \coloneqq z(\sigma^{(i)}_\varepsilon) = \sum_{j \in S^{(i)}_\varepsilon}c_j + \sum_{j \in N \setminus  S^{(i)}_\varepsilon}c_j \prod_{j \in S^{(i)}_\varepsilon} p_j.
\end{equation*} 
Observe that~$z^{(i)}_\varepsilon$ is finite for at least one~$i \in \{1,\ldots, n\}$ since~$v_1(b,m) \leq 1$ for~$b = \sum_{j=1}^m c_j^{(1)}$.  The algorithm completes by returning a testing set~$S_\varepsilon = S^{(k)}_\varepsilon$ and schedule~$\sigma_\varepsilon = (S_\varepsilon, N \setminus S_\varepsilon)$, where~$k$ is such that~$\sigma^{(k)}_\varepsilon$ achieves the minimum cost $z^{(k)}_\varepsilon = \min_{i = 1,\ldots, n_+} z^{(i)}_\varepsilon$ for the original instance.

Since~$c_1 \leq \ldots \leq c_n$, we have for each~$i = 1,\ldots,n_+$ that
\begin{equation*}
\sum_{j = i}^n c^{(i)}_j = \sum_{j = i}^n \left\lfloor \frac{c_j}{\mu^{(i)}} \right\rfloor = \sum_{j = i}^n \left\lfloor \frac{n c_j}{\varepsilon c_i} \right\rfloor \leq \sum_{j = i}^n \frac{n c_j}{\varepsilon c_i} \leq \sum_{j = i}^n \frac{n}{\varepsilon}  \leq \frac{n^2}{\varepsilon},
\end{equation*}
and our dynamic program yields~$S^{(i)}_\varepsilon$ for a given~$i$ in time~$O(n^4 / \varepsilon)$. Since we need to evaluate~$z^{(i)}_\varepsilon$ for at most~$n$ different values of~$i$, we thus obtain~$S_\varepsilon$ and~$\sigma_\varepsilon$ in time $O(n^5 / \varepsilon)$.

It remains to be shown that $z(\sigma_\varepsilon) \leq (1 + \varepsilon) z(\sigma^\star)$. Let~$i$ be the largest index for which~$S^\star \subseteq \{i, i+1, \ldots, n\}$. Observe that~$z^{(i)}_\varepsilon$ is finite since~$S^\star$ is feasible, and that~$z(\sigma^\star) \geq c(S^\star) \geq c_i > 0$ because $i \in S^\star$ and $c(S^\star) > 0$. It also follows from our definition of~$S^{(i)}_\varepsilon$ as an optimal testing set in the modified instance that
\begin{equation*}
\sum_{j \in S^{(i)}_\varepsilon} c^{(i)}_j +  \sum_{j \in N \setminus S^{(i)}_\varepsilon} c^{(i)}_j \prod_{j \in S^{(i)}_\varepsilon} p_j \leq \sum_{j \in S^\star} c^{(i)}_j + \sum_{j \in N \setminus S^\star} c^{(i)}_j \prod_{j \in S^\star} p_j.
\end{equation*}
Finally, observe that, by construction, $c^{(i)}_j = \lfloor c_j / \mu^{(i)}\rfloor \leq c_j / \mu^{(i)} \leq \lfloor c_j / \mu^{(i)}\rfloor + 1$ for every~$j \in N$. From the observations above and the definition of~$\mu^{(i)} = \varepsilon c_i /n$, we obtain that
\begin{align*}
z(\sigma^{(i)}_\varepsilon) &= \sum_{j \in S^{(i)}_\varepsilon}c_j + \sum_{j \in N \setminus  S^{(i)}_\varepsilon}c_j \prod_{j \in S^{(i)}_\varepsilon} p_j \\
&\leq \sum_{j \in S^{(i)}_\varepsilon} \mu^{(i)}\left(\left\lfloor\frac{c_j}{\mu^{(i)}}\right\rfloor + 1\right) + \sum_{j \in N \setminus S^{(i)}_\varepsilon} \mu^{(i)} \left(\left\lfloor\frac{c_j}{\mu^{(i)}}\right\rfloor + 1\right) \prod_{j \in S^{(i)}_\varepsilon} p_j \\
&\leq \mu^{(i)} \left(\sum_{j \in S^{(i)}_\varepsilon} \left\lfloor\frac{c_j}{\mu^{(i)}}\right\rfloor + \sum_{j \in N \setminus S^{(i)}_\varepsilon} \left\lfloor\frac{c_j}{\mu^{(i)}}\right\rfloor \prod_{j \in S^{(i)}_\varepsilon} p_j \right) + n \mu^{(i)} \\
&\leq \mu^{(i)} \left( \sum_{j \in S^\star} \left\lfloor\frac{c_j}{\mu^{(i)}}\right\rfloor + \sum_{j \in N \setminus S^\star} \left\lfloor\frac{c_j}{\mu^{(i)}}\right\rfloor \prod_{j \in S^\star} p_j \right) + n \mu^{(i)} \\
&\leq \left(\sum_{j \in S^\star} c_j + \sum_{j \in N \setminus S^\star} c_j \prod_{j \in S^\star} p_j \right) + \varepsilon  c_i \\
&\leq z(\sigma^\star) + \varepsilon z(\sigma^\star) = (1+ \varepsilon)z(\sigma^\star).
\end{align*}
Since~$z(\sigma_\varepsilon) = \min_{j = 1,\ldots, n_+} z(\sigma^{(j)}_\varepsilon)$, this shows that~$z(\sigma_\varepsilon) \leq z(\sigma^{(i)}_\varepsilon)\leq (1 + \varepsilon) z(\sigma^\star)$.
\end{proof}
\end{appendices}

\end{document}